\newcommand\incircbin
\newcommand\@incircbin[2]
\def\squareforqed{\leavevmode\hbox to.77778em{\hfil\vrule\vbox to.675em{\hrule width.6em\vfil\hrule}\vrule\hfil}} 
\newtheorem{theorem}{Theorem}
\newtheorem{corollary}[theorem]{Corollary}
\newtheorem{lemma}[theorem]{Lemma}
\newtheorem{proposition}[theorem]{Proposition}
\theoremstyle{remark}
\newcommand*{\vcenteredhbox}[1]{\begingroup
\setbox0=\hbox{#1}\parbox{\wd0}{\box0}\endgroup}
\newcommand{\nosemic}{\renewcommand{\@endalgocfline}{\relax}}
\newcommand{\dosemic}{\renewcommand{\@endalgocfline}{\algocf@endline}}
\newcommand{\tmax}{\tau}
\newcommand{\wask}{}
\date{}
\title{Approximate Triangle Counting via Sampling and Fast Matrix Multiplication}
\author{Jakub Tětek\\\texttt{j.tetek@gmail.com}\\Basic Algorithms Research Copenhagen,\\ University of Copenhagen}
\begin{document}
\maketitle
\begin{abstract}
There is a trivial $O(\frac{n^3}{T})$ time algorithm for approximate triangle counting where $T$ is the number of triangles in the graph and $n$ the number of vertices. At the same time, one may count triangles exactly using fast matrix multiplication in time $\tilde{O}(n^\omega)$. Is it possible to get a negative dependency on the number of triangles $T$ while retaining the $n^\omega$ dependency on $n$? We answer this question positively by providing an algorithm which runs in time $O\big(\frac{n^\omega}{T^{\omega - 2}}\big) \cdot \text{poly}(n^{o(1)}/\epsilon)$. This is optimal in the sense that as long as the exponent of $T$ is independent of $n, T$, it cannot be improved while retaining the dependency on $n$; this as follows from the lower bound of Eden and Rosenbaum [APPROX/RANDOM 2018]. Our algorithm improves upon the state of the art when $T = \omega(1)$ and $T = o(n)$.

We also consider the problem of approximate triangle counting in sparse graphs, parameterizing by the number of edges $m$. The best known algorithm runs in time $\tilde{O}\big(\frac{m^{3/2}}{T}\big)$ [Eden et al., SIAM Journal on Computing, 2017]. There is also a well known algorithm for exact triangle counting that runs in time $\tilde{O}(m^{2\omega/(\omega + 1)})$. We again get an algorithm that retains the exponent of $m$ while running faster on graphs with larger number of triangles. Specifically, our algorithm runs in time $O\Big(\frac{m^{2\omega/(\omega+1)}}{ T^{2(\omega-1)/(\omega+1)}}\Big) \cdot \text{poly}(n^{o(1)}/\epsilon)$. This is again optimal in the sense that if the exponent of $T$ is to be constant, it cannot be improved without worsening the dependency on $m$. This  algorithm improves upon the state of the art when $T = \omega(1)$ and $T = o(\sqrt{m})$.
\end{abstract}

\section{Introduction}
The problem of counting triangles in a graph is both a fundamental problem in graph algorithms and a problem with many applications, for example in network science \cite{Barabasi2016}, biology \cite{Chung2006} or sociology \cite{Wimmer2010}. It is, also for these reasons, one of the basic procedures in graph mining. Consequently, triangle counting has received a lot of attention both in the theoretical and applied communities.

There are algorithms for \emph{exact} triangle counting that run in time $\tilde{O}(n^\omega)$ \footnote{We use $\tilde{O}$ with the meaning that it ignores an $n^{o(1)}$ factor; this is necessary as $\omega$ only determines the time complexity up to $n^{o(1)}$. $\tilde{O}_\epsilon$ in addition ignores $\text{poly}(\frac{1}{\epsilon})$. This differs from the usual use which only ignores $\text{poly} \log n$ factors.} and $\tilde{O}(m^{2\omega/(\omega+1)})$ \cite{Alon1997} where $\omega$ is the smallest constant such that two $n\times n$ matrices can be multiplied in time $n^{\omega + o(1)}$. There are also algorithms for \emph{approximate} triangle counting that have \emph{negative dependence} on the number of triangles $T$. In the setting where the algorithm may sample random vertices and edges, there are algorithms that run in time 
$O(\frac{n^3}{T})$ and $\tilde{O}(\frac{m^{3/2}}{T})$.
However, no algorithm is known that would get the best of both worlds --- algorithm with state-of-the-art dependency on $n$ or $m$ while being negatively dependent on $T$. We show two such algorithms in this paper. Moreover, our algorithms are optimal in the sense that the exponent of $T$ in the time complexity of our algorithms cannot be improved without worsening the dependency on $n$ or $m$ as long as the exponent of $T$ is a constant (i.e., is independent of $n,m,T$).
This optimality follows from the lower bound shown by \citet{Eden2017}, as we will discuss shortly.

The main contribution of our paper is an algorithm for approximate triangle counting that runs in time $\tilde{O}_\epsilon\big(\frac{n^\omega}{T^{\omega-2}}\big)$. We then use it to give an algorithm for the same problem running in time $\tilde{O}_\epsilon\Big(\frac{m^{2\omega/(\omega+1)}}{T^{2(\omega-1)/(\omega+1)}}\Big)$. This improves upon the first algorithm for sparse graphs. Note that both time complexities are sublinear for sufficiently large value of $T$. To the best of our knowledge, this is the first work which uses fast matrix multiplication in sublinear-time algorithms. Assuming constant $\epsilon$, our algorithms improve upon the state-of-the-art when $T = \Omega(n^\delta)$ for some $\delta > 0$, and $T \leq o(n)$ or $T = \Omega(m^\delta)$ and $T \leq o(\sqrt{m})$ (for \Cref{alg:tc_dense} and \Cref{alg:tc_sparse}, respectively). In other words, our algorithms improve upon the state of the art when the time complexity (of both our and the state-of-the-art algorithms) is $\omega(n^2)$ and $\omega(m)$, respectively, for constant $\epsilon$. For diagrammatic comparison of our algorithms with previous work, see \Cref{fig:time_complexities}.

The basic approach we use is to sample vertices, recursively count triangles in the subgraph induced by these sampled vertices, and estimate the total number of triangles based on that. The main hurdle in this approach is that when a vertex contains many triangles, this results in a poor concentration of the number of triangles in the induced subgraph as the inclusion of one vertex can make a large difference in the number of triangles. We get around this by introducing a procedure based on fast matrix multiplication which finds all vertices contained in many triangles, and a procedure for approximate counting triangles containing such vertices. This allows us to reduce the problem of counting triangles in a graph to the problem of counting triangles in a smaller graph; we solve this problem recursively. An obstacle to overcome is that in some cases it may happen that the time spent in each successive level of recursion grows exponentially.

\begin{figure}
\includegraphics[width=\textwidth]{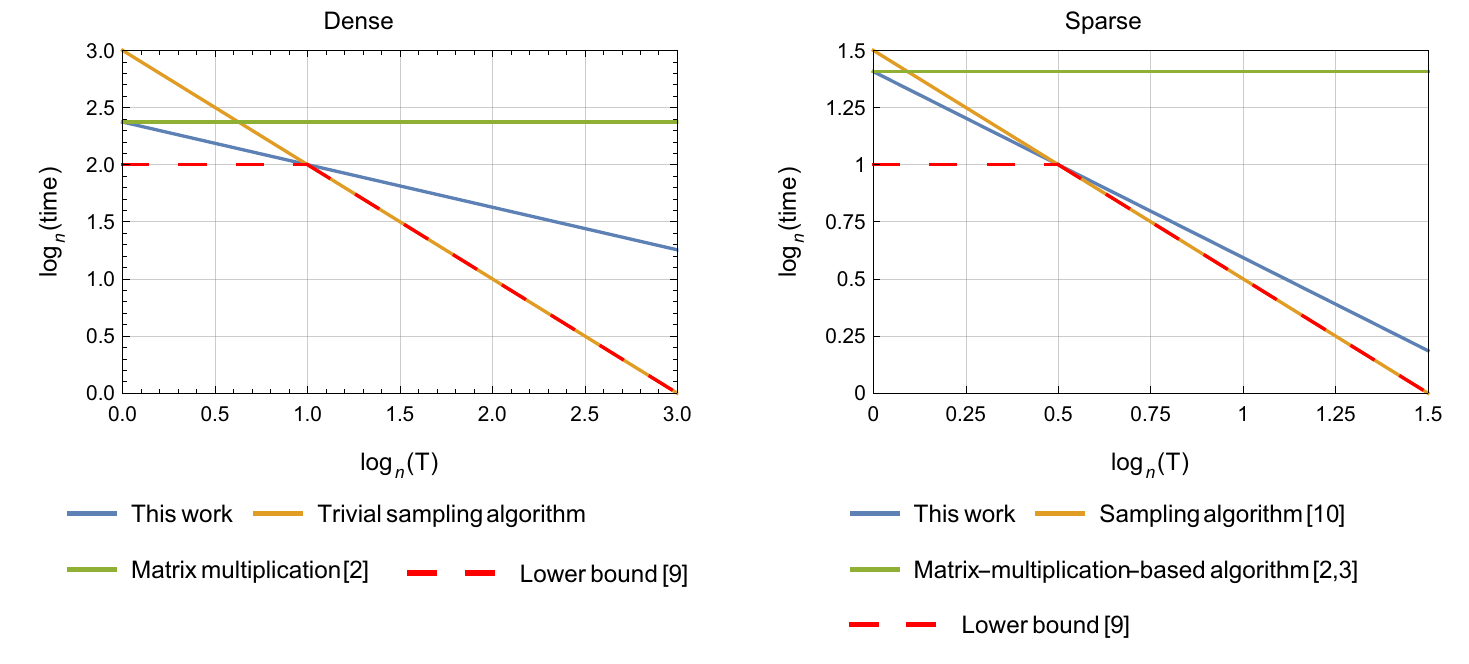}
\caption{Comparison of running times, assuming $\epsilon$ is constant. This plot assumes the best currently known bound $\omega < 2.3728596$ from \citet{Alman2021}. For more details, see \Cref{tab:summary_results}.} \label{fig:time_complexities}
\end{figure}

\paragraph{The setting.} When an algorithm runs in sublinear time, it is of importance how the algorithm is allowed to access the graph. This is the case because the algorithm does not have the time to pre-process the whole graph. We assume the following standard setting: the algorithm may in constant time (1) get the $i$-th vertex, (2) get the $j$-th neighbor of a given vertex, (3) query the degree of a given vertex; in the second part of this paper, we assume the setting which in addition allows the algorithm to get the $i$-th edge of the graph in constant time. This setting is also standard in the area of sublinear algorithms. As the main contribution of this paper is the case with superlinear complexity, we do not describe the settings in more details. See for example \cite{Assadi2018} for a more detailed description.
%
%

\paragraph{Time complexity vs. query complexity}
In the area of sublinear algorithms, it is customary to focus on the \emph{query complexity} (that is, the number of queries in the above sense performed) of an algorithm, as opposed to the time complexity. The time complexity is then usually near-linear in the query complexity. However, this is the case usually only in the sublinear regime, and it breaks down when the time complexity is superlinear. In this case, one may always read the whole graph in $O(n+m)$ queries. The time complexity is then significantly larger than the query complexity. This is the regime that is our main focus. 

Specifically, the algorithm from \cite{Eden2017} are known to be near-optimal\footnote{Whenever we talk about optimality, we consider the case of $\epsilon$ being a constant.} in terms of the query complexity. In the sublinear regime (that is, for $T$ large enough), the asymptotic time complexity is the same as the query complexity and it is, therefore, also optimal. However, in the superlinear regime, the time complexity is polynomially larger than the query complexity. Up until now, nothing was known about whether the time complexity can be improved in this case. We answer this question positively. In doing this, we give, to the best of our knowledge, the first algorithm answering a question of this type (that is, a question regarding the discrepancy between the time complexity and query complexity in the superlinear regime).

\subsection{Our results}
The main contributions of this paper are the following two algorithms for approximate triangle counting. The first algorithm (\Cref{alg:tc_dense}) is suitable for dense graphs and its time complexity is is parameterized by $n$ and $T$; the second algorithm (\Cref{alg:tc_sparse}) is suitable for sparse graphs and its time complexity is parameterized by $m$ and $T$. Both of our algorithms have a state-of-the-art dependency on $n,m$ while being negatively dependent on $T$. Our algorithms improve upon the existing algorithms when $T = o(n)$ and $T = o(\sqrt{m})$, respectively.

We compare these results to the previous work in the following table. These time complexities are also plotted in \Cref{fig:time_complexities}. See the description of the plot for details. Note that the plot ignores dependency on $n$ in the case of sparse triangle counting. We view this as inconsequential as our algorithm improves upon the state-of-the-art for the case when there are relatively few triangles. On this range, $n$ is indeed a lower-order term, unless there are more vertices than edges. 

\def\arraystretch{1.2}
\setlength{\tabcolsep}{1.5em}
\begin{table}[H]
\centering
\begin{tabular}{|l|l|l|}
\hline& Dense & Sparse \\\hline
This work & $\tilde{O}_\epsilon\big(\frac{n^\omega}{T^{\omega-2}}\big)$ & $\tilde{O}_\epsilon\Big(\frac{m^{2\omega/(\omega+1)}}{T^{2(\omega-1)/(\omega+1)}}\Big) $  \\
Trivial algorithm & $O_\epsilon(\frac{n^3}{T})$ &   \\
Fast matrix multiplication & $\tilde{O}(n^\omega)$ &   \\
\citet{Alon1997} &  & $\tilde{O}(m^{2\omega/(\omega+1)})$ \\
\citet{Eden2017} & $\Omega\big(\min(n^2, \frac{n^3}{T})\big)$ & $\tilde{O}_\epsilon( \frac{m^{3/2}}{T})$, $\Omega\big(\min(m, \frac{m^{3/2}}{T})\big)$ \\\hline
\end{tabular}
\caption{Algorithms for (approximate) triangle counting.}
\label{tab:summary_results}
\end{table}
\subsection{Related work}
\paragraph{Algorithms.} The number of triangles in a graph can be trivially counted in time $\tilde{O}(n^3)$. \citet{Itai1977} obtained an algorithm that runs in time $O(m^{3/2})$, a significant improvement for sparse graphs. A simpler algorithm based on bounds on arboricity has been given by \citet{Chiba1985} that implies the same bounds. A more efficient algorithm for \emph{approximate} triangle counting has been obtained by \citet{Kolountzakis2012}. In their paper, the authors presented an algorithm that runs in time $O(m+\frac{m^{3/2} \log n}{T \epsilon^2})$. This has been later improved by \citet{Eden2017} to $\tilde{O}(\frac{n}{T^{1/3}} + \frac{m^{3/2}}{T})$ when the random edge queries are not allowed and to $\tilde{O}(\frac{m^{3/2}}{T})$ when they are.

While the fast matrix multiplication gives a $\tilde{O}(n^\omega)$ algorithm for exact triangle counting, it is not clear an improvement can be achieved in sparse graphs. In their paper, \citet{Alon1997} show an algorithm for exact triangle counting that runs in time $\tilde{O}(m^{2\omega/(\omega+1)})$. This algorithm works by counting triangles whose all vertices have high degree using matrix multiplication while using a naïve algorithm for the rest of the graph. We use a variant of this approach for approximate triangle counting in a sparse graph in \Cref{sec:sparse}.

Sampling from graphs in order to estimate the number of triangles has been considered many times \cite{Tsourakakis2009,Pagh2012,Kallaugher2016,Biswas2020}; for more details, see the introduction of \cite{Kallaugher2016}. The bounds then often depend on an upper bound on the number of triangles containing any single edge or similar parameters, for example in \cite{Pagh2012} or \cite{Tsourakakis2009}. In this paper, we go further in the sense that we use a similar sampling approach (in fact, we use the sampling scheme from \cite{Biswas2020} with finer analysis) but explicitly ensure the bound on the number of triangles containing any vertex. In other papers using similar sampling schemes, it is just assumed that this or a similar assumption is satisfied for the input instance.

\paragraph{Lower bounds.} From the side of lower bounds, \citet{Eden2017} have proven a lower bound on the query complexity, thus also implying a lower bound on the time complexity. This lower bound is presented for the setting when random edge queries are not allowed but the lower bound of $\Omega(\min(m, m^{3/2}/T))$ also holds in the setting when they are. This was made explicit by \citet{Eden2018b} who presented a significantly simpler proof based on communication complexity. We use this lower bound to prove the claim that the exponent of $T$ in \Cref{thm:tc_dense} is optimal, out of all algorithms running in time $\tilde{O}(n^\omega/T^c)$ for some constant $c$ and similarly the exponent in \Cref{thm:tc_sparse} is optimal out of all algorithms running in time $\tilde{O}(m^{2\omega/(\omega+1)}/T^c)$. Specifically, the following theorem is proven in \cite{Eden2018b}:
\begin{theorem}[Theorem 4.7 in \citet{Eden2018b}]
For any $n, m \in O(n^2), T \in O(m^{3/2})$, it holds that any multiplicative-approximation algorithm for the number of triangles in a graph must perform $\Omega\left(\min \left\{m, \frac{m^{3 / 2}}{T}\right\}\right)$ queries, where the allowed queries are degree queries, pair queries\footnote{By pair query, we mean the query ``are vertices $u,v$ adjacent?"}, random neighbor queries, random vertex queries and random edge queries.
\end{theorem}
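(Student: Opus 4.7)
The plan is to prove the lower bound via reduction from a two-party communication complexity problem. A natural candidate is the gap-Hamming problem or set disjointness, each of which has randomized communication complexity $\Omega(k)$ for inputs of length $k$. Given inputs $a,b \in \{0,1\}^k$ to Alice and Bob with $k = \Theta(\min(m, m^{3/2}/T))$, I would have them jointly specify a graph $G_{a,b}$ such that (i) $|V(G_{a,b})| = \Theta(n)$ and $|E(G_{a,b})| = \Theta(m)$ regardless of the inputs, and (ii) the triangle count is $\Theta(T)$ in the ``yes'' case and a constant-factor smaller in the ``no'' case, so that any multiplicative approximation distinguishes them.

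For the construction, a natural scheme fixes a public \emph{scaffold} graph (common knowledge, with $\Theta(m)$ edges and a controlled base triangle count) and lets $a$ and $b$ each determine a small set of private edges. The scaffold is chosen so that certain ``candidate'' triangles can be completed one-to-one with coordinates of $(a,b)$; the number of triangles then shifts by $\Theta(T)$ depending on, say, $\langle a,b\rangle$, creating the required multiplicative gap. Parameters must be balanced carefully: in the regime $T \leq \sqrt{m}$ the bound $m$ comes from there being enough ``room'' in the graph to hide $\Theta(m)$ signal edges, while in the regime $T \geq \sqrt{m}$ the bound $m^{3/2}/T$ corresponds to each private edge participating in roughly $T/k$ triangles.

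The next step is to show that every allowed query on $G_{a,b}$ can be answered by $O(\log n)$ bits of communication. Degree, pair, random vertex, and random neighbor queries are straightforward because each vertex's adjacency list depends primarily on one party, with at most a constant number of bit exchanges to resolve signal edges. Concatenating the simulations for all $q$ queries yields an $O(q \log n)$-bit protocol; the $\Omega(k)$ communication lower bound then forces $q = \Omega(k/\log n)$, and starting the reduction from gap-Hamming (or using a finer direct distributional analysis) removes the $\log n$ slack to match the theorem as stated.

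The hardest part will be simulating the \emph{random edge} query, which is precisely why earlier query-complexity lower bounds often excluded it. The construction has to be arranged so that the edge multiset is essentially a public scaffold plus a small, well-structured private component, so that Alice and Bob can sample a uniform edge using shared randomness and $O(\log n)$ bits of interaction in expectation. A secondary difficulty is to make a single construction cover the full parameter range $T \in O(m^{3/2})$, $m \in O(n^2)$; it may be cleanest to exhibit two sub-constructions, one optimised for $T \leq \sqrt{m}$ yielding the $\Omega(m)$ side of the minimum, and one for $T \geq \sqrt{m}$ yielding the $\Omega(m^{3/2}/T)$ side.
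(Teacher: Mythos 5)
This statement is not proven in the paper at all: it is Theorem~4.7 of Eden and Rosenbaum, cited verbatim, and the paper's only commentary is that Eden and Rosenbaum gave ``a significantly simpler proof based on communication complexity'' than the earlier Eden--Levi--Ron--Seshadhri lower bound. Your high-level plan --- reduce from a two-party communication problem, build a graph $G_{a,b}$ whose triangle count jumps by a constant factor depending on the answer, and simulate each query with $O(\log n)$ bits --- matches the strategy the paper attributes to the cited work, and you correctly single out the simulation of random-edge queries as the crux (it is exactly why earlier lower bounds excluded that query type).

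That said, what you have written is a research outline rather than a proof, and the genuine gap is that none of the load-bearing details appear. You never specify the scaffold graph, never give the map from coordinates of $(a,b)$ to candidate triangles, and never verify that edge counts and degrees are input-independent (needed so that degree and neighbor queries leak no information for free). You never describe how Alice and Bob jointly sample a uniform random edge from $G_{a,b}$; this requires the private edge sets to have a cardinality that is either fixed or publicly computable, and it is exactly the place where a naive construction collapses. The two-regime split ($T \le \sqrt m$ versus $T \ge \sqrt m$) is named but not parameterized, so there is no way to check that $k = \Theta(\min(m, m^{3/2}/T))$ and that the gap is actually multiplicative; for example, in the dense-triangle regime you would need each signal edge to close $\Theta(T/k)$ triangles, which constrains the scaffold to be a blowup-type graph whose description you omit. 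Finally, ``a finer direct distributional analysis removes the $\log n$ slack'' is an assertion, not an argument --- the standard set-disjointness bound only gives $\Omega(k)$ bits, so either you need a query simulation that costs $O(1)$ amortized bits (which requires a specific construction) or a problem with a $\Omega(k \log n)$ communication lower bound. Until these are supplied, the proposal should be read as a plausible plan consistent with the cited proof, not as a proof.
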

By choosing $m = \Theta(n^2)$, it follows that
\begin{corollary}
For any $n, T \in O(n^3)$, it holds that any multiplicative-approximation algorithm for the number of triangles in a graph must perform $\Omega\left(\min \left\{n^2, \frac{n^3}{T}\right\}\right)$ queries, where the allowed queries are degree queries, pair queries, neighbor queries, random vertex queries and random edge queries.
\end{corollary}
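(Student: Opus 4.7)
The proof is a direct specialization of the preceding theorem. The plan is simply to invoke Theorem 4.7 with a specific choice of the edge count, namely $m = \Theta(n^2)$, and to check that both the hypotheses and the conclusion transform into those of the corollary. Under this substitution, the hypothesis $m \in O(n^2)$ is automatic, and $m^{3/2} = \Theta(n^3)$, so the constraint $T \in O(m^{3/2})$ becomes precisely $T \in O(n^3)$ as assumed. The conclusion $\Omega(\min\{m, m^{3/2}/T\})$ then becomes $\Omega(\min\{n^2, n^3/T\})$, matching the bound to be proved, and the allowed query model in the corollary is the same as in Theorem 4.7.

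The only step that requires a small amount of care is ensuring that, for every admissible pair $(n, T)$, there is indeed a hard instance on $n$ vertices with $m = \Theta(n^2)$ edges and about $T$ triangles, since Theorem 4.7 is phrased in terms of a free choice of $(n, m, T)$. This is not a real obstacle: the Eden--Rosenbaum construction is parameterised freely by $n$ and $m$ subject only to $m \in O(n^2)$, so for any $T \in O(n^3)$ one may pick $m = c n^2$ for an appropriate constant $c > 0$, and if the construction naturally uses fewer than $n$ vertices we may pad with isolated vertices, which changes neither $m$ nor $T$. I do not foresee a genuine difficulty here; the entire content of the corollary is the parameter substitution $m = \Theta(n^2)$ into the bound of Theorem 4.7.
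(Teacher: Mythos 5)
Your proposal is correct and matches the paper's argument exactly: the paper derives the corollary solely by substituting $m = \Theta(n^2)$ into Theorem~4.7, which is precisely what you do. The extra remark about padding with isolated vertices is a harmless elaboration beyond what the paper states.
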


\paragraph{Fast matrix multiplication.} There is a large literature on fast matrix multiplication. We only mention here the most efficient matrix multiplication algorithm which is currently the algorithm by \citet{Alman2021} which gives $\omega < 2.3728596$.

\subsection{Technical overview}
\paragraph{Triangle counting in dense graphs.} Suppose we sample each vertex independently with some probability $p$. The expected number of triangles remaining is then $p^3 T$. If we were able to show concentration around the expected value, we could sample a sufficiently large subgraph, count the number of triangles in that subgraph and based on that, estimate the number of triangles in the original graph.

Unfortunately, the number of triangles is not concentrated; for example in the case when there is one vertex which is contained in all triangles. We show that we get concentration if we limit the number of triangles containing any single vertex. We, therefore, have some threshold $\tau$ and call all vertices that are contained in more than $\tau$ triangles triangle-heavy. Assume that the graph does not have any triangle-heavy vertices. By choosing $\tau$ sufficiently small and by sampling sufficiently many vertices, we may get a good estimate of the number of triangles in the original graph, based on the number of triangles in the subgraph induced by the sampled vertices.

To be able to use this, we have to be able to find all triangle-heavy vertices in the graph and then estimate the number of triangles that contain at least one triangle-heavy vertex. It is this procedure for finding the triangle-heavy vertices that uses matrix multiplication.

The last trick we introduce is that we use recursion to approximately count the triangles in the sampled subgraph. An obstacle is that the precision increases in the depth of recursion. Specifically, in our algorithm, the allowed error decreases at an exponential rate. Moreover, we have to use probability amplification, meaning that the number of calls grows exponentially with the depth of recursion. This leads, in some situations, to the time complexity of the $k$-th level of recursion increasing exponentially in $k$; we bound this time.

\paragraph{Triangle counting in sparse graphs} We set a parameter $\theta$ and say a vertex is degree-heavy if its degree is at least $\theta$ and degree-light otherwise. We count separately \emph{(a)} triangles on the subgraph induced by the degree-heavy vertices and \emph{(b)} triangles that contain at least one degree-light vertex. This is basically the idea used by \citet{Alon1997} for their exact triangle-counting algorithm. We use our algorithm for approximate triangle counting in dense graphs to count the triangles from case \emph{(a)} and sampling to count the triangles \emph{(b)}. The sampling-based estimation is based on ideas from \cite{Chiba1985,Eden2017}. Specifically, we assign each triangle to its vertex with the lowest degree (breaking ties arbitrarily). We then repeatedly perform the following experiment: pick an edge uniformly at random; consider its endpoint $v$ with the lower degree, breaking ties arbitrarily; pick one of its neighbors at random; check whether it forms a triangle together with the edge we have sampled and whether the triangle is assigned to $v$. We estimate the number of triangles based on the proportion of experiments that ended with finding a triangle.

\subsection{Notation}
Throughout the paper, we denote the number of vertices and edges of a graph by $n$ and $m$, respectively. We use $T$ to denote the number of triangles. We define $a \pm b = [a - b, a+b]$. This can be used for example in $(1\pm \epsilon) a$ which is then equal to $[(1-\epsilon)a, (1+\epsilon)a]$. We call $(1\pm \epsilon)$-approximation to a number $a$ any number $b$ such that $b \in (1\pm \epsilon)a$. Similarly, we call \emph{additive} $\pm c$ approximation to a number $a$ any number $b$ such that $b \in a \pm c$. We call \emph{diamond} a graph isomorphic to \vcenteredhbox{\includegraphics[height=2\fontcharht\font`\B]{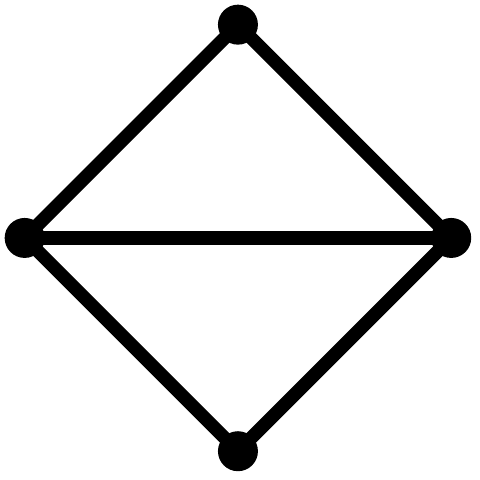}} and \emph{butterfly} a graph isomorphic to \vcenteredhbox{\includegraphics[height=2\fontcharht\font`\B]{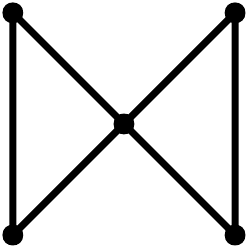}}. By ``the number of diamonds in $G$'' we mean the number of, not necessarily induced, subgraphs of $G$ ismomorphic to a diamond. We similarly talk about ``the number of butterflies in a graph''. We denote the subgraph of $G$ induced by $V' \subseteq V$ by $G[V']$.

\section{Triangle counting in dense graphs} \label{sec:dense}
In this section, we present the main result of our paper --- an algorithm for approximate triangle counting, parameterized by $n$ and the number of triangles $T$. In the lemmas that follow, we prove among other things, bounds on the expectation of the estimators. We do this because our algorithm requires some ``advice" and having a bound on the expectation will allow us to remove the need for this advice. We now define some notation that we will need.

We call a vertex \emph{triangle-dense} if there are at least $\tmax$ triangles that contain the vertex for some parameter $\tmax$ and \emph{triangle-light} if there at most $\tmax/20$ such triangles (note that there can be vertices that are neither triangle-light nor triangle-heavy). We define $T_v$ to be the number of triangles containing the vertex $v$. We abuse notation and also denote by $T_v$ the set of all triangles containing $v$. We use $T_v(G)$ when we want to explicitly specify the graph. $T$ denotes the number of triangles in $G$ but we also abuse notation and use this to denote the set of triangles. Again, if we want to specify the graph, we use $T(G)$.

We now prove a simple lemma on fractional moments of the binomial distribution. We will use this to bound the time complexity of matrix multiplication executed on subgraph obtained by keeping each vertex with some appropriately chosen probability and removing the other vertices.
\begin{lemma} \label{lem:binom_moment}
Assume $p \geq 1/n$. It holds that $E(Bin(n,p)^\omega) = O(n^\omega p^\omega)$
\end{lemma}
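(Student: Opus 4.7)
The plan is to interpolate between the second and third moments of the binomial distribution, both of which admit clean closed forms that are easily bounded by the leading term when $np \geq 1$. This pins the fractional moment $E[X^\omega]$ for $\omega \in (2,3)$ using Lyapunov's inequality (equivalently Hölder).

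First I would recall the factorial-moment formulas for $X \sim \text{Bin}(n,p)$ to obtain
\[
E[X^2] = np + n(n-1)p^2 \quad\text{and}\quad E[X^3] = np + 3n(n-1)p^2 + n(n-1)(n-2)p^3.
\]
Using the hypothesis $np \geq 1$, each term is at most a constant times $(np)^2$ and $(np)^3$ respectively, yielding $E[X^2] \leq 2(np)^2$ and $E[X^3] \leq 5(np)^3$.

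Second, I would interpolate. Write $\omega = 2(3-\omega) + 3(\omega-2)$ and apply Hölder's inequality (or Lyapunov's inequality for moments) with exponents $1/(3-\omega)$ and $1/(\omega-2)$, which are conjugate since $(3-\omega)+(\omega-2)=1$:
\[
E[X^\omega] \;=\; E\bigl[X^{2(3-\omega)} \cdot X^{3(\omega-2)}\bigr] \;\leq\; \bigl(E[X^2]\bigr)^{3-\omega} \bigl(E[X^3]\bigr)^{\omega-2}.
\]
Plugging in the bounds from the first step gives $E[X^\omega] \leq 2^{3-\omega} 5^{\omega-2} (np)^{2(3-\omega)+3(\omega-2)} = O((np)^\omega) = O(n^\omega p^\omega)$, as required.

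The only mildly subtle point is why the assumption $p \geq 1/n$ appears: without it, the linear and quadratic terms in the expression for $E[X^3]$ could dominate $(np)^3$, so the bound $E[X^3] = O((np)^3)$ would fail. The condition $np \geq 1$ is exactly what is needed to absorb the lower-order terms of the factorial moments into the leading term. The Lyapunov step is standard and requires no calculation beyond checking that the exponents sum correctly, so I do not anticipate any real obstacle.
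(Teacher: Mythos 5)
Your proof is correct, but it takes a slightly different route than the paper. The paper only bounds the third moment of $X=\mathrm{Bin}(n,p)$ (getting $E[X^3]=O((np)^3)$ from $p\geq 1/n$) and then applies Jensen's inequality directly: since $\omega<3$, the map $x\mapsto x^{\omega/3}$ is concave, so $E[X^\omega]\leq \bigl(E[X^3]\bigr)^{\omega/3}=O((np)^\omega)$. You instead interpolate between the second and third moments via H\"older/Lyapunov, which requires separately bounding $E[X^2]$ as well. Both are correct; the paper's argument is marginally leaner because only one moment needs to be computed and the exponent bookkeeping is trivial, whereas your two-moment interpolation is a touch more general --- it would yield a sharper constant and would remain robust if, say, one only had access to good second-moment information and a cruder third-moment bound. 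For this particular lemma that extra flexibility is unnecessary, but the argument is sound.
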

\begin{proof}
$Bin(n,p)$ has its third moment equal to
\[
(n(n-1)(n-2)p^3) + 3(n(n-1)p^2) + np = O(n^3p^3)
\]
where the equality holds because $p\geq1/n$. Since $\omega < 3$, the function $x^{\omega/3}$ is concave. By the Jensen's inequality, we then get
\[
E(Bin(n,p)^\omega) \leq (E(Bin(n,p)^3))^{\omega/3} \leq O(n^\omega p^\omega)
\]
\end{proof}

\subsection{Triangle-light subgraph}
We show a reduction from the problem of approximately counting triangles in a graph with no triangle-heavy vertices to the problem of approximately counting triangles in a smaller graph. Before that, we need to prove the following lemma.
\begin{lemma} \label{lem:diamond_butterfly_bound}
Let $G$ be a graph with $T$ triangles such that any vertex is contained in at most $\tau$ triangles and let $T,B$ be the diamonds and butterflies in $G$, respectively. Then $D+B \leq \frac{3}{2} \tau T$.
\end{lemma}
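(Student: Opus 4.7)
The plan is to bound $D+B$ by counting pairs of triangles that share at least one vertex, organized by their shared vertex/vertices. The key observation is that any two distinct triangles in a graph can share either zero vertices, exactly one vertex (forming a butterfly), or exactly two vertices, which is the same as exactly one edge (forming a diamond). So the pairs of distinct triangles sharing at least one vertex are precisely counted by $B + D$.

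Now I would double-count via vertices. For each vertex $v$, the number of unordered pairs of distinct triangles that both contain $v$ is $\binom{T_v}{2}$. Summing over $v$, each butterfly is counted exactly once (at its unique shared vertex) while each diamond is counted exactly twice (at both endpoints of its shared edge). Hence
\[
\sum_v \binom{T_v}{2} = B + 2D \geq B + D.
\]

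Next, I would apply the triangle-light assumption: since $T_v \leq \tau$ for all $v$, we have $\binom{T_v}{2} \leq \tau T_v / 2$. Summing and using the standard handshake-type identity $\sum_v T_v = 3T$ (each triangle contributes to three vertices), we get
\[
B + D \;\leq\; \sum_v \binom{T_v}{2} \;\leq\; \frac{\tau}{2}\sum_v T_v \;=\; \frac{3\tau T}{2},
\]
which is the desired bound.

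I do not expect any real obstacle here: the only thing to be careful about is the combinatorial identity $\sum_v \binom{T_v}{2} = B + 2D$, specifically the claim that two triangles sharing two vertices must share the edge between them (which is immediate since both triangles include the edge joining the two shared vertices). Everything else is a one-line manipulation.
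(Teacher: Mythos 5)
Your proof is correct and is essentially the same approach as the paper's: both are double-counting arguments over pairs of triangles sharing at least one vertex, combined with the bound $T_v \le \tau$ and the handshake identity $\sum_v T_v = 3T$. The minor organizational difference is that you fix a vertex and count $\binom{T_v}{2}$ pairs through it (yielding the clean identity $\sum_v \binom{T_v}{2} = B + 2D$), whereas the paper fixes a triangle and counts other triangles meeting it (bounding the ordered-pair count by $3T(\tau-1)$); these are dual views of the same argument and give the same bound.
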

\begin{proof}
Let $D,B$ be the number of diamonds and butterflies in $G$ and let $S$ be the number of pairs $(\triangle_1,\triangle_2)$ of triangles in $G$ such that $\triangle_1$ and $\triangle_2$ share at least one vertex. It holds $D+B \leq S/2$ (the factor of $2$ is there because the pairs are ordered). Consider a fixed triangle $\triangle_1 = uvw$. How many triangles are there that share a vertex with $\triangle$? Each of the vertices $u,v,w$ can be contained in at most $\tau - 1$ other triangles. There can be, therefore, at most $3(\tau-1)$ incidences between $\triangle$ and other triangles containing any of the vertices $u,v,w$. Therefore $D+B \leq S/2 \leq 3T(\tau-1)/2$
\end{proof}
%
%


\begin{lemma} \label{lem:sampling_concentration}
Let $G' = (V',E')$ be the induced subgraph of $G$ resulting from selecting each vertex with probability $p = \max(5 \frac{1}{\sqrt[3]{\epsilon^2 T}}, 10 \frac{1}{\epsilon \sqrt{T/\tmax}})$ and removing it otherwise. Let $\hat{T} = |T(G')|/p^3$. Then $\hat{T}$ is an unbiased estimate of $T$. Assume any vertex $v \in G$ is contained in at most $\tmax$ triangles. Then $\hat{T} \in (1\pm\epsilon) T$ with probability at least $19/20$.
\end{lemma}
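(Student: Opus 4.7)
The plan is to apply Chebyshev's inequality to $|T(G')|=\sum_\triangle X_\triangle$, where $X_\triangle$ is the indicator that all three vertices of the triangle $\triangle$ are sampled. Unbiasedness is immediate by linearity: $E[X_\triangle]=p^3$ gives $E[\hat T]=T$.

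The main work is to bound $\text{Var}(\hat T)=p^{-6}\text{Var}(|T(G')|)$. I would partition ordered pairs of distinct triangles $(\triangle_1,\triangle_2)$ by the size $k\in\{0,1,2\}$ of their shared vertex set. Pairs with $k=0$ are independent, so contribute zero covariance. Pairs with $k=1$ have covariance at most $p^5-p^6\le p^5$, and by the definition of a butterfly they are counted by exactly $2B$ ordered pairs. Pairs with $k=2$ (i.e., sharing an edge) have covariance at most $p^4$ and are counted by $2D$, since each diamond consists of exactly two triangles sharing an edge. Adding the diagonal contribution $T(p^3-p^6)\le Tp^3$, then using $p\le 1$ and Lemma~\ref{lem:diamond_butterfly_bound}, I obtain
\[
\text{Var}(|T(G')|) \;\le\; Tp^3 + 2Dp^4 + 2Bp^5 \;\le\; Tp^3 + 2(D+B)p^4 \;\le\; Tp^3 + 3\tmax T p^4,
\]
and therefore $\text{Var}(\hat T) \le T/p^3 + 3\tmax T/p^2$.

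It then remains to check that the two lower bounds on $p$ in the statement were tuned so that each summand is a small fraction of $(\epsilon T)^2$. The bound $p\ge 5(\epsilon^2 T)^{-1/3}$ yields $T/p^3 \le (\epsilon T)^2/125$, and the bound $p\ge 10/(\epsilon\sqrt{T/\tmax})$ yields $3\tmax T/p^2 \le 3(\epsilon T)^2/100$. Summing, $\text{Var}(\hat T) < (\epsilon T)^2/20$, and Chebyshev's inequality gives $\Pr[|\hat T-T|>\epsilon T]\le 1/20$, as required.

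The only non-routine step is the combinatorial identification of ordered pairs of triangles sharing exactly one (resp.\ two) vertices with the butterflies (resp.\ diamonds) of Lemma~\ref{lem:diamond_butterfly_bound}; after this is recorded the remainder is arithmetic. The crucial use of the triangle-light hypothesis enters only through the $D+B\le \tfrac{3}{2}\tmax T$ bound, which is precisely what prevents a single high-multiplicity vertex from inflating the variance and explains why the $\sqrt{\tmax/T}/\epsilon$ term must appear in $p$.
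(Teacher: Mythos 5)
Your proposal is correct and follows essentially the same route as the paper: expand the second moment of $|T(G')|$ by partitioning triangle pairs according to shared-vertex count ($0$, $1$, or $2$, i.e.\ disjoint, butterflies, diamonds), invoke Lemma~\ref{lem:diamond_butterfly_bound} to bound $D+B$, and finish with Chebyshev, with the same numeric split $1/125 + 3/100 < 1/20$. The only cosmetic difference is that you write the variance directly as a sum of covariances whereas the paper computes $E(T'^2)$ and then subtracts $E(T')^2$, which is arithmetically identical.
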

\begin{proof}
Let $\triangle$ be a triangle in $G$. Let $X_{\triangle} = 1$ if $\triangle \in T(G')$ and $0$ otherwise. Let $T' = T(G')$. It holds that
\[
E(\hat{T}) = E(T'/p^3) = \sum_{\triangle \in T}E(X_\triangle)/p^3 = T
\]
and the estimate is, therefore, unbiased.

We now bound $Var(T')$. Recall that $T(G)$ is the set of triangles in $G$. Let $D(G)$ the set of diamonds in $G$, $B(G)$ the set of butterflies in $G$, and $\ddot{T}(G)$ the set of disjoint (unordered) pairs of triangles in $G$. Let $D = |D(G)|$, $B = |B(G)|$, and $\ddot{T} = |\ddot{T}(G)|$. By $\{\triangle_1, \triangle_2\} \in D(G)$ we mean that $\triangle_1 \cup \triangle_2 = A$ for some $A \in D(G)$ and analogously for $B(G)$ and $\ddot{T}(G)$.


We first bound the second moment of $T'$
\begin{align}
E({T'}^2)=& E\Bigg(\Big(\sum_{\triangle \in T(G)} X_\triangle \Big)^2\Bigg)\\
=& E\Big(\sum_{\triangle \in T(G)} X_\triangle^2\Big) + E\Big(\sum_{\substack{\triangle_1, \triangle_2 \in T(G) \\ \triangle_1 \neq \triangle_2}} X_{\triangle_1} X_{\triangle_2}\Big)\\
=& E\Big(\sum_{\triangle \in T(G)} X_\triangle^2\Big) + E\Big(2 \sum_{\{\triangle_1, \triangle_2\} \in D(G)} X_{\triangle_1} X_{\triangle_2}\Big)\\
&+ E\Big(2\sum_{\{\triangle_1, \triangle_2\} \in B(G)} X_{\triangle_1} X_{\triangle_2}\Big) + E\Big(2\sum_{\{\triangle_1, \triangle_2\} \in \ddot{T}(G)} X_{\triangle_1} X_{\triangle_2}\Big)\\
=& p^3 T + 2 p^4 D + 2 p^5 B+ 2 p^6 \ddot{T}\\
\leq& p^3 T + 2 (B + D) p^4 T+ 2 p^5 \tmax T + p^6 T^2\\
\leq& p^3 T + 3 p^4 \tmax T + p^6 T^2
\end{align}
where we have used that $D + B \leq \frac{3}{2} \tau T$ (by \Cref{lem:diamond_butterfly_bound}) and $2\ddot{T} \leq T^2$. At the same time, $E(T') = p^3 T$. Therefore,
\begin{align}
Var(T') = E({T'}^2) - E(T')^2 \leq p^3 T + 3 p^4 \tmax T
\end{align}
This means that $E(\hat{T}) = T$ and $Var(\hat{T}) = Var(\frac{1}{p^3}T') = \frac{1}{p^3} T + \frac{3}{p^2}\tmax T$. Since $p = \max(5 \frac{1}{\sqrt[3]{\epsilon^2 T}}, 10 \frac{1}{\epsilon \sqrt{T/\tmax}})$, it holds $Var(\hat{T}) \leq (1/5^3 + 3/10^2) \epsilon^2 T^2 < \frac{1}{20} \epsilon^2 T^2$. It, therefore holds by the Chebyshev inequality that
\[
P(|\hat{T} - T| \geq \epsilon T) \leq 1/20
\]
\end{proof}

\subsection{Triangle-heavy subgraph}
We now show an algorithm that approximately counts triangles that contain at least one vertex from set $V_H$ where $V_H$ is some given set that does not contain any triangle-light vertices.

\begin{algorithm}
\For{$v \in V_H$}{
    $\hat{T}_v \leftarrow 0$\\
    \RepTimes{$360 \frac{n^2 \log n}{\epsilon^2 \tmax}$}{
        Sample $u,w \in V$\\
        Let $\ell = |\{u,v,w\} \cap V_H\}|$\\
        If $uvw$ forms a triangle, increment $\hat{T}_v$ by $\frac{\epsilon^2 \tmax}{360 \ell    \log n}$
    }
}
\Return{$\sum_{v \in V_H} \hat{T}_v$}

\caption{Count $(1\pm\epsilon)$-approximately triangles in $G$ containing a vertex from $V_H$} \label{alg:heavy_sampling}
\end{algorithm}
\begin{lemma} \label{lem:heavy_sampling}
Given a set $V_H$, \Cref{alg:heavy_sampling} returns an unbiased estimate of the number of triangles containing at least one vertex from $V_H$.

Assume $V_H$ contains all triangle-heavy vertices of $G$ and no triangle-light vertices. Then \Cref{alg:heavy_sampling} returns $(1\pm \epsilon)$-approximation of the number of triangles that contain at least one triangle-heavy vertex with probability at least $1-O(\frac{1}{n})$. It runs in time $O\big(\frac{T n^2 \log n}{\epsilon^2 \tmax^2}\big)$. 
\end{lemma}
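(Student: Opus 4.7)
The plan is to handle three things in sequence: (i) unbiasedness, (ii) concentration under the extra hypothesis on $V_H$, and (iii) the running time. For (i) I would fix $v \in V_H$ and directly compute the expected per-iteration contribution to $\hat T_v$. Each triangle $\triangle \in T_v$ is detected by a sampled pair $(u,w)$ with probability $\Theta(1/n^2)$, independently of the other iterations, and when detected contributes $\epsilon^2 \tmax / (360\, \ell(\triangle) \log n)$. Multiplying by the $N = 360 n^2 \log n / (\epsilon^2 \tmax)$ iterations and then summing over $v \in V_H$, the outer sum ranges over exactly $\ell(\triangle)$ vertices of $V_H$ for each triangle $\triangle$ with $\ell(\triangle) \ge 1$, so the $1/\ell(\triangle)$ weight cancels this $\ell(\triangle)$-fold overcounting and one sees that $E\bigl[\sum_{v \in V_H} \hat T_v\bigr]$ equals the number of triangles meeting $V_H$. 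This part does not require the second hypothesis.

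The main work is (ii). For each fixed $v \in V_H$, $\hat T_v$ is a sum of $N$ i.i.d.\ bounded variables, each at most $M = \epsilon^2 \tmax / (360 \log n)$ (attained when $\ell = 1$). Under the hypothesis that $V_H$ contains no triangle-light vertex, every $v \in V_H$ satisfies $T_v > \tmax/20$, and from the calculation in step (i) one gets $E[\hat T_v] = \Theta(T_v) = \Omega(\tmax)$. Consequently $E[\hat T_v]/M = \Omega(\log n / \epsilon^2)$, so a Bernstein-type bound for bounded variables yields $P(|\hat T_v - E[\hat T_v]| > \epsilon\, E[\hat T_v]) \le 2\exp(-\Omega(\log n))$; by tracking the constants (the factor $360$ inside the algorithm provides the slack) this is at most $n^{-c}$ for any fixed $c$. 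A union bound over $|V_H| \le n$ gives failure probability $O(1/n)$, and if every $\hat T_v$ is within a $(1 \pm \epsilon)$ factor of $E[\hat T_v]$, so is their sum relative to its expectation.

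For (iii), each iteration of the inner loop performs $O(1)$ work: two random vertex samples, three pair queries to test triangle-ness, and an $O(1)$ $V_H$-membership check to read off $\ell$. Hence the total running time is $O(|V_H| \cdot N)$. To bound $|V_H|$ I would use the identity $\sum_{v \in V} T_v = 3T$ together with $T_v > \tmax/20$ for every $v \in V_H$, which gives $|V_H| \le 60\, T / \tmax$. Substituting yields the claimed bound $O\bigl(T n^2 \log n / (\epsilon^2 \tmax^2)\bigr)$.

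The step I expect to be most delicate is the concentration in (ii): the per-iteration increment is not constant but takes three different values depending on $\ell \in \{1,2,3\}$, so one cannot simply count ``successes'' via a textbook Chernoff bound. The saving grace is that all nonzero increments lie within a factor of $3$ of each other, so a Bernstein-type inequality applies cleanly once $M$ and the lower bound on $E[\hat T_v]$ are in place, and the $360 \log n$ factor inside $N$ is precisely what makes the exponent beat a union bound over $V_H$.
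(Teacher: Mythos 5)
Your proof is correct and follows essentially the same route as the paper's: (i) you establish unbiasedness by the same $1/\ell$-weighting argument (the paper phrases it as a fair "charge" of one unit per triangle split among its $V_H$-vertices; your cancellation of the $\ell$-fold overcount is the same observation); (ii) you note that $E[\hat T_v]$ is within a constant factor of $T_v$, use $T_v > \tmax/20$ to lower-bound it, and apply a bounded-increment Chernoff/Bernstein bound followed by a union bound over $V_H$, exactly as the paper does; and (iii) you bound $|V_H| = O(T/\tmax)$ via $\sum_v T_v = 3T$. The one thing worth noting in your favor is that the paper states the running time but never actually writes out the $|V_H|$ bound; your step (iii) supplies that omitted calculation explicitly.

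Two small points of care. First, be precise in (ii) about what the concentration target is: each $\hat T_v$ concentrates around its own mean $\chi_v$ (the weighted count), not around $T_v$, and $\chi_v \ne T_v$ when a triangle has more than one vertex in $V_H$; summing the $\chi_v$ recovers the number of triangles meeting $V_H$, so the final relative error is still correct, but phrase this as concentration about $\chi_v$ rather than $T_v$. Second, your claim that the per-vertex failure probability can be driven to $n^{-c}$ for \emph{any} fixed $c$ overstates what the fixed constant $360$ in the algorithm buys — with the constants as written one gets $O(1/n^2)$, which is exactly what is needed for the union bound; a larger $c$ would require a larger constant multiplier inside the repetition count.
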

\begin{proof}
%
We introduce charges on vertices. For any triangle $\triangle$ that contains at least one vertex that is in $V_H$, we divide and charge single unit to the vertices in $\triangle \cap V_H$, dividing the charge fairly (if there are, e.g., two vertices from $V_H$ in the triangle, they are both charged $1/2$). Let $\chi_v$ be the charge on vertex $v$. The total amount charged (that is, $\sum_{v \in V_H} \chi_v$) is equal to the number of triangles that contain at least one vertex from $V_H$. Note that this is the quantity we want to estimate.

The algorithm considers $v \in V_H$ and samples one pair of vertices $u,w$. Consider the amount charged by $uvw$ to $v$; call it $\chi'_v$. It holds that $E(\chi'_v)= \chi_v/n^2$. When $uvw$ form a triangle, the algorithm increments $\hat{T}_v$ by $\frac{\epsilon^2 \tmax}{360 \ell \log n} = \frac{\chi'_v \epsilon^2 \tmax}{360 \log n}$. Therefore, the increment is, in expectation, equal to $\frac{\chi_v \epsilon^2 \tmax}{360 n^2 \log n}$. Since there are $\frac{n^2 \log n}{360 \epsilon^2 \tmax}$ repetitions, $v$ is in expectation charged total of
\[
\frac{360 n^2 \log n}{\epsilon^2 \tmax} \cdot \frac{\chi_v \epsilon^2 \tmax}{360 n^2 \log n} = \chi_v
\]
as we desire.

We now prove concentration around mean. For each vertex, we have $\frac{360 n^2 \log n}{\epsilon^2 \tmax}$ independent random variables corresponding to the increments in $\hat{T}$. These random variables have values between $0$ and $\frac{\epsilon^2 \tmax}{360 \log n}$ and their sum has expectation $\chi_v \geq \frac{T_v}{3} \geq \frac{\tau}{60}$ where the second inequality holds from the assumption that $V_H$ contains no triangle-light vertices. By the Chernoff bound, 
\[
P(|\hat{T}_v - T_v| \geq \epsilon T_v) \leq 2\exp\Big(-\frac{\epsilon^2 \tau/60}{3 \epsilon^2 \tau / (360 \log n)}\Big) \leq \frac{2}{n^2}
\]
By the union bound, it holds for all $v$ that $\hat{T}_v$ is an $(1\pm\epsilon)$-approximate of $T_v$, with probability at least $1-\frac{2}{n}$. On this event, the algorithm correctly gives a $(1\pm\epsilon)$-approximation of the number of triangles containing at least one triangle-heavy vertex.
\end{proof}

\subsection{Finding the triangle-heavy subgraph} \label{sec:finding_heavy}
In this section, we show how to find a set of vertices that contains each triangle-heavy vertex with probability at least $2/3$ and each triangle-light vertex with probability at most $1/3$. This guarantee may be strengthened by probability amplification to make sure that, with high probability, all triangle-heavy vertices are reported, and none of the triangle-light vertices are. This only adds $O(\log n)$ factor to the time complexity of this subroutine.

We solve separately the case $\tmax \leq n$ and $\tmax \geq n$. On the range $\tmax \leq n$, we get an algorithm running in time $\tilde{O}(\frac{n^\omega}{\tmax^{\omega-2}})$. On the range $\tmax \geq n$, we get time complexity $O(\frac{n^3}{\tmax}) \subseteq \tilde{O}(\frac{n^\omega}{\tmax^{\omega-2}})$ (the inclusion holds on this range of $\tmax$). This means that on this range, our bound is not tight. However, this is the case only on the range of $T$ for which a near-optimal algorithm was already known, and we only show this for completeness; the reader may wish to skip the case of $\tau \geq n$. Putting the guarantees for the two ranges together, it follows that
\begin{corollary} \label{cor:distinguish_heavy}
There is an algorithm, that with probability at least $1-O(\frac{1}{n})$, reports all triangle-heavy vertices and no triangle-light vertices while having time complexity $\tilde{O}(\frac{n^\omega}{\tmax^{\omega-2}})$.
\end{corollary}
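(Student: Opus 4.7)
The plan is to handle the two sub-regimes $\tmax\le n$ and $\tmax>n$ separately and combine them. The $\tmax>n$ case is only included for completeness: for each vertex $v$ draw $\Theta(n^2\log n/\tmax)$ uniformly random pairs from $V\times V$ and classify $v$ as heavy iff enough of them form a triangle through $v$; a Chernoff bound plus a union bound over the $n$ vertices gives the stated classification in time $O(n^3\log n/\tmax)$, and this is dominated by $\tilde O(n^\omega/\tmax^{\omega-2})$ whenever $\tmax\ge n$ (since then $(n/\tmax)^{3-\omega}\le 1$ because $\omega<3$).

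For the main case $\tmax\le n$ I would exploit fast matrix multiplication. Sample $V'\subseteq V$ by keeping each vertex independently with probability $p=1/\tmax$, and then compute, simultaneously for every $v\in V$, the number $X_v$ of ordered pairs $(u,w)\in V'\times V'$ such that $uvw$ is a triangle in $G$. Writing $A$ for the adjacency matrix of $G$ and $S$ for the diagonal $0/1$ indicator of $V'$, the $X_v$ are exactly the diagonal of the product $A\cdot(SAS)\cdot A$; computing $(SAS)\cdot A$ by $\lceil n/|V'|\rceil$ block multiplications of size $|V'|\times|V'|$ costs $O(n\,|V'|^{\omega-1})$, and the diagonal of the remaining $A\cdot[(SAS)A]$ product can be read off in $O(n\,|V'|)$ further time. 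Applying \Cref{lem:binom_moment} together with Jensen's inequality for the concave map $x\mapsto x^{(\omega-1)/\omega}$ yields $E[|V'|^{\omega-1}]=O((np)^{\omega-1})$, so the expected per-round cost is $O(n^\omega p^{\omega-1})=O(n^\omega/\tmax^{\omega-1})$; Markov's inequality together with a constant number of restarts turns this into a high-probability time bound.

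A single round satisfies $E[X_v]=2p^2 T_v$, so I would repeat the experiment $R=\Theta(\tmax\log n)$ times independently and sum, obtaining an aggregate $Y_v$ with expectation $\Theta(T_v\log n/\tmax)$. This is $\Omega(\log n)$ when $v$ is triangle-heavy and at most $\log n/20$ when $v$ is triangle-light, so the two cases can be separated by thresholding $Y_v$ around the midpoint. A second-moment calculation analogous to the one in the proof of \Cref{lem:sampling_concentration}, but localised at $v$, gives $\mathrm{Var}(X_v^{(1)})=O(p^2 T_v+p^3 S_v)$, where $S_v$ counts ordered pairs of distinct triangles at $v$ sharing one further vertex; combined with the crude bound $S_v\le T_v^2$ this gives $\sigma^2/\mu^2=O(1/\log n)$ for the aggregate $Y_v$, which is enough for Chebyshev's inequality to produce constant-probability correctness per vertex. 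A further $O(\log n)$ independent repetitions followed by a majority vote then drive the per-vertex failure probability below $1/n^2$, and a union bound over $v\in V$ yields the claimed $1-O(1/n)$ guarantee. The total running time is $O(\log n)\cdot R\cdot O(n^\omega/\tmax^{\omega-1})=\tilde O(n^\omega/\tmax^{\omega-2})$, as required.

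The main obstacle I anticipate is the concentration step: establishing the per-round variance bound uniformly in $v$ without any extra structural assumption, and converting the expected-time bound from the Jensen argument into a high-probability time bound before stitching everything together. Once these are pinned down, the remaining work is the straightforward combination of the $\tmax\le n$ and $\tmax>n$ subroutines into a single procedure that branches on the input value of $\tmax$.
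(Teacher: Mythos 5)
Your proposal is correct, but for the core range $\tmax\le n$ it takes a genuinely different route from the paper's. The paper's \Cref{alg:distinguish_matrix_multiplication} requires \emph{all three} vertices of a triangle to survive the sample: it keeps each vertex with probability $p=1/\tmax$, runs a full $|V_i|^\omega$ triangle computation on the induced subgraph $G[V_i]$, marks a vertex the moment any induced triangle through it survives, and repeats $\Theta(\tmax^2)$ times so that a heavy vertex (survival probability $\Theta(T_v p^3)\ge\Theta(1/\tmax^2)$ per round) is found with constant probability while a light vertex is never marked by a union bound over its $\le \tmax/20$ triangles and the $\Theta(\tmax^2)$ rounds. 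You instead keep the pivot vertex $v$ deterministic and only sample the other two, reading off $\mathrm{diag}\bigl(A(SAS)A\bigr)$ via a rectangular $|V'|\times|V'|$ by $|V'|\times n$ product at cost $O(n\,|V'|^{\omega-1})$ per round, then aggregating a quantitative estimate $Y_v\approx T_v$ over $\Theta(\tmax\log n)$ rounds and thresholding. The two bookkeepings cancel to the same $\tilde O(n^\omega/\tmax^{\omega-2})$: you save a factor $\tmax$ in the survival probability ($p^2$ vs.\ $p^3$), but pay it back in the rectangular matrix product ($n|V'|^{\omega-1}$ vs.\ $|V'|^\omega$ with $n\approx \tmax|V'|$). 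The paper's scheme is the more elementary of the two --- it reuses the standard triangle-detection-by-squaring primitive and needs no covariance analysis, since for light vertices a plain union bound suffices and for heavy vertices a one-line diagonal variance bound does the job. Your scheme is more refined in that it actually estimates $T_v$ rather than merely flagging it, at the cost of the second-moment bookkeeping around $S_v$ and of needing to argue the light-vertex side via an upper-tail Chebyshev bound (which you gloss over but which does go through: for light $v$, $E[Y_v]\le (\log n)/10$ while $\mathrm{Var}(Y_v)=O(\log n)$, so the threshold near $(\log n)/2$ is safe). One minor soft spot is the conversion of the Jensen-based expected-time bound to a high-probability bound; a constant number of Markov restarts gives only constant confidence, so one should either accept an expected-time guarantee as the paper does, or use $O(\log n)$ restarts with a time cutoff. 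The $\tmax>n$ branch you sketch is essentially the paper's \Cref{alg:distinguish_sampling} with the amplification folded in.
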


\subsubsection{The case \texorpdfstring{$\tmax \leq n$}{tau <= n}}
\begin{algorithm}
$\mathcal{M} \leftarrow \emptyset$\\
\For{$i \in [k=6 \tmax^{2}]$}{
    Keep each vertex with probability $p = \tmax^{-1}$\\
    Let $V_i$ be the set of remaining vertices\\
    Find all triangles in $G[V_i]$ in time $\tilde{O}(|V_i|^\omega)$\\
    $\mathcal{M} \leftarrow \mathcal{M} \cup$ ``vertices that are contained in some triangle in $G[V_i]$''
}
\Return{$\mathcal{M}$}

\caption{Distinguish triangle-heavy vertices from triangle-light vertices} \label{alg:distinguish_matrix_multiplication}
\end{algorithm}

\begin{lemma}
\Cref{alg:distinguish_matrix_multiplication} lists any triangle-heavy vertex with probability at least $2/3$ and any triangle-light vertex with probability at most $1/3$. It has time complexity $\tilde{O}(\frac{n^\omega}{\tmax^{\omega-2}})$.
\end{lemma}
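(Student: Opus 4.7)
The plan is to fix an arbitrary vertex $v$, analyze the probability that $v$ is added to $\mathcal{M}$ in a single iteration, and then combine this across the $k=6\tmax^2$ independent iterations. For a single iteration $i$, condition on $v\in V_i$ (probability $p=1/\tmax$) and let $Y$ denote the number of triangles through $v$ whose other two endpoints also land in $V_i$. Then $v$ is added to $\mathcal{M}$ in iteration $i$ iff $v\in V_i$ and $Y\geq 1$, so $P(v\in\mathcal{M}_i)=p\cdot P(Y\geq 1\mid v\in V_i)$.

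For the triangle-light case ($T_v\leq\tmax/20$), Markov's inequality suffices: $P(Y\geq 1\mid v\in V_i)\leq E[Y\mid v\in V_i]=T_v p^2\leq 1/(20\tmax)$, so $P(v\in\mathcal{M}_i)\leq 1/(20\tmax^2)$. A union bound over the $k=6\tmax^2$ iterations then yields $P(v\in\mathcal{M})\leq 3/10<1/3$, as required.

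The triangle-heavy case ($T_v\geq\tmax$) is the main technical obstacle: here $E[Y\mid v\in V_i]$ is only $\Omega(1/\tmax)$, so no first-moment argument can work, and I would use Paley--Zygmund ($P(Y>0)\geq (E[Y])^2/E[Y^2]$), which requires an upper bound on $E[Y^2\mid v\in V_i]$. To get one, classify ordered pairs of distinct triangles through $v$ by how many extra vertices they share: a pair sharing one extra vertex contributes $p^3$ to $E[Y^2]$, while a pair sharing only $v$ contributes $p^4$. Writing the count of the first type as $2\sum_{u\in N(v)}\binom{d_u}{2}$ with $d_u$ the number of triangles through edge $vu$, and using $\sum_u d_u=2T_v$ together with the trivial bound $d_u\leq T_v$, one obtains $O(T_v^2)$. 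Hence $E[Y^2\mid v\in V_i]\leq T_v p^2+O(T_v^2 p^3)+T_v^2 p^4$, and Paley--Zygmund, with $p=1/\tmax$ and $T_v\geq\tmax$, gives $P(Y>0\mid v\in V_i)\geq\Omega(1/\tmax)$. Therefore $P(v\in\mathcal{M}_i)\geq\Omega(1/\tmax^2)$, and across $k=6\tmax^2$ independent iterations, $P(v\in\mathcal{M})\geq 1-(1-\Omega(1/\tmax^2))^{6\tmax^2}\geq 1-e^{-c}>2/3$ for an absolute constant $c>0$; a bookkeeping of the explicit constants shows that the choice $k=6\tmax^2$ is indeed large enough.

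For the running time, each iteration performs matrix-multiplication-based triangle listing on $G[V_i]$ in time $\tilde O(|V_i|^\omega)$. Since $|V_i|\sim\mathrm{Bin}(n,1/\tmax)$ and $1/\tmax\geq 1/n$ in the regime $\tmax\leq n$ considered here, Lemma~\ref{lem:binom_moment} gives $E[|V_i|^\omega]=O(n^\omega/\tmax^\omega)$. Summing over the $k=6\tmax^2$ iterations yields total expected running time $\tilde O(\tmax^2\cdot n^\omega/\tmax^\omega)=\tilde O(n^\omega/\tmax^{\omega-2})$, matching the claimed bound.
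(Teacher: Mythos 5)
Your proof is correct and follows essentially the same approach as the paper: a union bound (equivalently, Markov on the expected number of retained incident triangles) for the light case, a second-moment method for the heavy case, and \Cref{lem:binom_moment} for the running time. The only difference is cosmetic: the paper applies Chebyshev to the aggregate $X=\sum_i X_i$ of retained triangles through $v$ over all iterations, while you apply Paley--Zygmund within a single iteration (conditioned on $v\in V_i$) and then exploit independence across iterations; both are the same second-moment bound in different guises and yield the same conclusion.
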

\begin{proof}
Consider a triangle-light vertex $v$. That is, there are at most $\tmax/20$ triangles in $G$ containing $v$. The probability that a triangle is retained in one iteration is, by the union bound, at most $\tmax p^3/20$. Taking a union bound over the $k$ iterations, the probability that a triangle containing $v$ is found (and thus $v$ reported) is at most $k \tmax p^3 /20 < 1/3$.

Consider a triangle-heavy vertex $v$. Let $X_i$ for $i\in[k]$ be the number of triangles containing $v$ retained in the $i$-th iteration. For $\triangle \in T$ such that $v \in \triangle$, let $X_{\triangle, i}$ be an indicator that triangle $\triangle$ is retained in $i$-th iteration. It now holds $X_i = \sum_{\triangle\in T, v\in \triangle} X_{\triangle, i}$. Let $X = \sum_{i=1}^{k} X_i$.

\medskip \noindent
By the linearity of expectation, it holds that
\[
E(X) = E\Big(\sum_{i=1}^k X_i\Big) = k p^3 T_v = 6 T_v/\tmax
\]
We now bound the variance of $X$. We first bound
\[
Var(X_i) \leq E(X_i^2) = E\bigg(\sum_{\triangle_1, \triangle_2 \in T_v} X_{\triangle_{1},i} X_{\triangle_{2},i}\bigg) \leq p^3 T_v + p^4 T_v^2
\]
Since the iterations are independent, it holds that
\[
Var(X) = Var\Big(\sum_{i=1}^{k} X_i\Big) = k Var(X_i) \leq 6 \tmax^2 (p^3 T_v + p^4 T_v^2) = \frac{6}{\tmax} T_v + \frac{6}{\tmax^2} T_v^2 \leq \frac{12 T_v^2}{\tmax^2}
\]
where the last inequality holds because $T_v \geq \tmax$. It, therefore, holds by the Chebyshev inequality that
\[
P(X = 0) \leq \frac{Var(X)}{E(X)^2} \leq \frac{12 T_v^2/\tmax^2}{(6 T_v/\tmax)^2} = 1/3
\]

We now argue the time complexity. It holds that $|V_i| \sim Bin(n,p)$. By \Cref{lem:binom_moment}, each iteration has expected time complexity $\tilde{O}(n^\omega p^\omega)$. There are $p^{-2}$ iterations, meaning that the total time complexity is
\[
\tilde{O}(p^{-2} p^\omega n^\omega) = \tilde{O}\Big(\frac{n^\omega}{\tau^{\omega - 2}}\Big)
\]
\end{proof}

\subsubsection{The case \texorpdfstring{$\tmax > n$}{tau > n}}
We repeat this case only applies on the range where an optimal algorithm is already known (see the beginning of \Cref{sec:finding_heavy}) and we only show this for completeness; the reader may wish to skip this part.
\begin{algorithm}
$\mathcal{M} \leftarrow \emptyset$\\
\For{$v \in V$}{
    \RepTimes{$k = 2n^2/\tmax$}{
        Sample $u,w \in V$\\
        \If{$uvw \in T$}{
            $\mathcal{M} \leftarrow \mathcal{M} \cup \{v\}$
        }
    }
}
\Return{$\mathcal{M}$}

\caption{Distinguish triangle-heavy vertices from triangle-light vertices} \label{alg:distinguish_sampling}
\end{algorithm}

\begin{lemma}
\Cref{alg:distinguish_sampling} lists any triangle-heavy vertex with probability at least $2/3$ and any triangle-light vertex with probability at most $1/3$. It runs in time $\tilde{O}(\frac{n^3}{\tmax})$.
\end{lemma}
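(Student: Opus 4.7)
The plan is to analyse a single vertex $v$ by treating each of the $k = 2n^2/\tmax$ inner iterations as an independent Bernoulli trial whose success probability is exactly $2T_v/n^2$ (each ordered pair $(u,w)$ with $u\neq w$ has probability $1/n^2$, and each unordered triangle $\{u,v,w\}$ corresponds to two such ordered pairs). The guarantees for triangle-heavy and triangle-light vertices then follow from the union bound and a direct $(1-x)^k \leq e^{-kx}$ estimate respectively.

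First I would handle triangle-light vertices: if $T_v \leq \tmax/20$, then the per-iteration success probability is at most $\tmax/(10 n^2)$. Applying the union bound over the $k = 2n^2/\tmax$ iterations gives a total probability of at most $(2n^2/\tmax)\cdot \tmax/(10 n^2) = 1/5 < 1/3$ that $v$ is ever added to $\mathcal{M}$. Next I would handle triangle-heavy vertices: if $T_v \geq \tmax$, the per-iteration success probability is at least $2\tmax/n^2$, so the probability that \emph{no} iteration finds a triangle through $v$ is at most
\[
\left(1-\frac{2\tmax}{n^2}\right)^{2n^2/\tmax} \leq e^{-4} < 1/3,
\]
which is the desired bound on the probability that a triangle-heavy $v$ is missed.

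Finally, the time complexity is immediate: the outer loop runs $n$ times, the inner loop $k = 2n^2/\tmax$ times, and each iteration samples two vertices and checks whether $uvw$ is a triangle, which in the assumed query model costs at most $\polylog(n)$ time (for verifying the three adjacencies). Multiplying gives $\tilde{O}(n^3/\tmax)$.

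There is no real obstacle here; the only care needed is the factor of $2$ when counting ordered pairs realising a given triangle and making sure the constants $20$ (in the definition of triangle-light) and $2$ (in the choice of $k$) are compatible with both the $1/3$ thresholds. Both bounds have slack, so the choice $k = 2n^2/\tmax$ comfortably works.
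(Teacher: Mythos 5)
Your proposal is correct and follows essentially the same approach as the paper: a union bound over incident triangles and iterations for the triangle-light case, and a standard $(1-x)^k \le e^{-kx}$ bound over the $k$ independent inner iterations for the triangle-heavy case, with the time complexity read off directly from the loop structure. The only (immaterial) difference is that you track the ordered-pair factor of $2$ explicitly (getting per-iteration probability $2T_v/n^2$ and hence $e^{-4}$ for the heavy case and $1/5$ for the light case), whereas the paper is slightly looser and uses $T_v/n^2$, still landing within the $1/3$ and $2/3$ thresholds.
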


\begin{proof}
Consider the probability a triangle-light vertex $v$ is reported. Similarly to the case $\tmax \leq n$, it holds by the union bound over all incident triangles and over the $k$ iterations that this probability is at most $\tmax/20 \times k \times 1/n^2 \leq 1/3$.

Consider now the probability a triangle-heavy vertex $v$ is reported. In each iteration, the probability that we find an incident triangle is $T_v/n^2 \geq \tmax/n^2$. The probability that we report $v$ is now at least
\[
1-(1-\tmax/n^2)^{2 n^2/\tmax} \geq 1-1/e^2 \geq 2/3
\]
\end{proof}

\subsection{Recursive algorithm}
We now take the subroutines presented above and put them together into recursive one algorithm. Our proof works by induction on the depth of recursion. For this reason, the statement assumes the input graph can be random. This algorithm requires advice $\tilde{T}$ such that $\tilde{T} \leq E(T)$ (note that $T$ is a random variable); we will remove the need for advice later. As we already mentioned, for the advice removal, we will need to prove a bound on the expectation of the estimate so that we are able to remove this advice $\tilde{T}$.

We show an algorithm that gives additive approximation. The reason is that this is better suited for performing recursive calls. Specifically, the time complexity of getting relative approximation worse when $T$ is small. If it happened that most triangles contained a triangle-heavy vertex, we would recurse on a subgraph with few triangles, making it more costly to get the relative approximation. We then show how this algorithm can be turned into an algorithm that gives relative approximation.

The algorithm takes a parameter $q > 0$. One can think of $q$ as being a positive number close to $0$. Decreasing this parameter improves the asymptotic time complexity of the algorithm but increases the hidden constant factor. However, this concerns a term in the time complexity that is a lower order term for constant $\epsilon$. The exact choice of $q$ is therefore not a big concern for most choices of parameters.

\begin{algorithm}
\If{$G$ has $O(1)$ vertices or $A^2 \leq \frac{10^7}{q} \tilde{T}$}{ \label{line:high_precision_case}
    \Return{number of triangles in $G$, computed using matrix multiplication} \label{line:count_exactly}
}
\If{$\tilde{T} < 1/5$}{\label{line:no_triangles}
    \Return 0 
}
$\tilde{T}' = 20\tilde{T}$\\
$\tmax \leftarrow  \frac{A^2 q^2}{40000 T}$\\
$V_H \leftarrow$ triangle-heavy vertices w.r.t. $\tmax$ using \Cref{cor:distinguish_heavy}
\label{line:distinguish}\\
$\hat{T}_H \leftarrow$ estimate number of triangles with non-empty intersection with $V_H$ using \Cref{alg:heavy_sampling} with $\epsilon = qA/2\tilde{T}'$\\ \label{line:heavy_estimate}
$V' \leftarrow$ sample each vertex from $V \setminus V_H$ independently with probability $p = \frac{20 \sqrt{\tmax \tilde{T}'}}{q A} = 1/10$ \label{line:sample_light}\\
$\hat{T}_{V'} \leftarrow$ estimate number of triangles in $G[V']$ by a recursive call with $A = (1-q) p^3 A$ and with $\tilde{T} = \tilde{T}p^3$; amplify success probability to $19/20$ by taking median of $7$ independent executions \label{line:amplify}\\
\Return{$\hat{T}_H + \hat{T}_{V'}/p^3$}

\caption{Estimate the number of triangles in $G$, given some $q > 0$ and advice $\tilde{T}$} \label{alg:tc_dense}
\end{algorithm}
Note that \cref{line:sample_light} can be efficiently implemented as follows: We pick $X \sim Bin(n,p)$, then sample $X$ vertices without replacement, keep only the vertices from $V \setminus V_H$.

\begin{lemma} \label{lem:tc_dense}
Suppose $G$ is a (possibly random) graph. Given parameters $A$, $\tilde{T}$ and $\frac{1}{2}>q>0$, \Cref{alg:tc_dense} returns $\hat{T}$ such that $E(\hat{T}) \leq E(T)$ \footnote{Since the graph can be random, $T$ is a random variable}. Moreover, if $\tilde{T} \geq E(T)$, \Cref{alg:tc_dense} returns $\hat{T} \in T \pm A$ with probability at least $4/5$. It runs in expected time $\tilde{O}\big(\frac{n^\omega}{(A^2/\tilde{T})^{\omega-2}} + \frac{\tilde{T}^{4 + 1/3 + \delta} T n^2}{A^6}\big)$ for any $\delta>0$ for $q>0$ sufficiently small.
\end{lemma}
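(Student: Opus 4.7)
The plan is to prove all three claims simultaneously by induction on the depth of recursion, which terminates once $A^2 \leq \frac{10^7}{q}\tilde T$ or $G$ has constant size. The base cases are immediate: in the matrix-multiplication branch $\hat T = T$ exactly, so $E(\hat T) = E(T)$ and the additive bound are trivial, and the $\tilde O(n^\omega)$ cost lies inside the first term of the claimed time because $A^2/\tilde T$ is bounded by a constant there; in the $\tilde T < 1/5$ branch we return $0$, and the hypothesis $\tilde T \geq E(T)$ together with Markov's inequality gives $P(T \geq 1) \leq E(T) < 1/5$, so $T = 0$ with probability at least $4/5$, and $E(0) \leq E(T)$ is trivial.

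For the inductive step, condition on the high-probability event $\mathcal E$ that \Cref{cor:distinguish_heavy} correctly identifies the triangle-heavy vertices (so every vertex of $G[V\setminus V_H]$ lies in at most $\tmax$ triangles), that \Cref{alg:heavy_sampling} delivers its promised approximation at precision $\epsilon = qA/(2\tilde T')$, and that $\tilde T' = 20\tilde T$ dominates $T$ (this last item follows from $\tilde T \geq E(T)$ via Markov with the slack factor of $20$). Under $\mathcal E$, \Cref{lem:heavy_sampling} gives $|\hat T_H - T_H| \leq qA/2$ for $T_H$ the number of triangles meeting $V_H$, and the choice $p = 20\sqrt{\tmax \tilde T'}/(qA) = 1/10$ can be checked to satisfy both lower bounds in the hypothesis of \Cref{lem:sampling_concentration} (using $\tilde T' \geq T_L := T(G[V\setminus V_H])$ and the definition of $\tmax$), so $|T(G[V'])/p^3 - T_L| \leq qA/2$. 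Applying the inductive additive bound with $A' = (1-q)p^3 A$ to each of the seven recursive estimates and taking their median amplifies the failure probability to at most $\binom{7}{4}(1/5)^4 \leq 1/20$; combining via the triangle inequality yields total additive error $qA/2 + qA/2 + (1-q)A = A$ at success at least $4/5$ by a union bound over the $O(1)$ failure events.

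For the unconditional bound $E(\hat T) \leq E(T)$, note that both \Cref{alg:heavy_sampling} and the subsampling $V \setminus V_H \to V'$ are unbiased, so conditioned on $V_H$ and $V'$ a single recursive estimate satisfies $E(\hat T_H + \hat T_{V',i}/p^3 \mid V_H, V') \leq T_H + E(T(G[V']) \mid V')/p^3$ by the inductive hypothesis and the tower property. The only subtlety is that the median of seven upper-bounded-in-expectation estimators is not itself upper-bounded in expectation in general; however, each $\hat T_{V',i}$ is pinned inside $T(G[V']) \pm A'$ except on a probability-$\leq 1/5$ event and is always non-negative with the exact-count leaves bounded by $T$, and a case analysis splitting on whether at least four of the seven realizations fall in the good window absorbs any inflation of the median above $T(G[V'])$ into the additive slack, preserving $E(\hat T) \leq E(T)$ through the recursion.

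The main obstacle is the expected running time. The recursion forms a random seven-ary tree of depth $L = O(\log(A^2/\tilde T))$ since $A^2/\tilde T$ shrinks by a constant factor $\beta = (1-q)^2 p^3 < 1$ per level. The triangle-heavy detection contributes $\tilde O(n_\ell^\omega/\tmax_\ell^{\omega-2})$ per subproblem at level $\ell$, which by \Cref{lem:binom_moment} is $\tilde O((np^\ell)^\omega/\tmax_\ell^{\omega-2})$ in expectation; summed over the $7^\ell$ subproblems at level $\ell$ this yields a geometric series dominated by its root term $\tilde O(n^\omega/(A^2/\tilde T)^{\omega-2})$ once $q$ is small enough that $7 p^\omega/(1-q)^{2(\omega-2)} < 1$ (which holds for $p=1/10$ and $\omega < 3$). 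The heavy-sampling contribution is $\tilde O(T_\ell n_\ell^2 \tilde T_\ell^4/(q^6 A_\ell^6))$ per subproblem; after substituting the per-level scalings $A_\ell = ((1-q)p^3)^\ell A$, $\tilde T_\ell = p^{3\ell}\tilde T$ and multiplying by the $7^\ell$ branching, the per-level total grows geometrically in $\ell$, so the sum is dominated by its final-level term. Evaluating at the termination condition $A_L^2/\tilde T_L = \Theta(1/q)$ and using the freedom to pick $q$ arbitrarily small to push the growth ratio $\log\rho/\log(1/\beta)$ as close to its limiting value as desired bounds this contribution by $\tilde O(\tilde T^{4+1/3+\delta} T n^2/A^6)$ for any prescribed $\delta > 0$; summing the two contributions gives the lemma.
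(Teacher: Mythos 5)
Your overall structure — induction on recursion depth, three claims run in parallel, base cases at \cref{line:high_precision_case} and \cref{line:no_triangles}, and separate geometric-series arguments for the decreasing (\cref{line:distinguish}) and increasing (\cref{line:heavy_estimate}) parts of the cost — matches the paper's proof.

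However, there is a genuine gap in your treatment of the unconditional bound $E(\hat T)\leq E(T)$. You rightly flag that the median of seven upper-bounded-in-expectation estimators need not itself be upper-bounded in expectation (the paper's proof, which simply writes ``by the inductive hypothesis, $E(\hat T_{V'}\mid V')\leq E(T_{V'}\mid V')$'', silently elides this). But your proposed repair invokes the fact that each $\hat T_{V',i}$ lies in $T(G[V'])\pm A'$ with probability at least $4/5$, and this is exactly the \emph{accuracy} half of the inductive hypothesis, which is available only under the hypothesis $\tilde T_{V'}\geq E(T(G[V']))$. The expectation bound must hold unconditionally — indeed, its whole purpose in \Cref{sec:removing_advice} is to control Markov overshoot in the regime where $\tilde T$ may be \emph{smaller} than $E(T)$ — so your case analysis has no footing in precisely the case it needs to cover. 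This gap would have to be filled before the induction goes through; either a different argument for the median, or a restructuring of the algorithm so that the expectation claim sees a single recursive call.

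Two smaller slips. Your binomial bound ``$\binom{7}{4}(1/5)^4\leq 1/20$'' is false ($35/625 = 0.056 > 0.05$); the correct statement is that $P(\mathrm{Bin}(7,1/5)\geq 4)\approx 0.033 < 1/20$, as the paper uses. And your per-level growth ratio for \cref{line:distinguish}, $7p^\omega/(1-q)^{2(\omega-2)}$, drops the effect of $\tmax_\ell$ shrinking: since $\tmax_\ell \propto ((1-q)^2 p^3)^\ell$, the correct ratio is $7p^{6-2\omega}/(1-q)^{2(\omega-2)}$, which is what the paper's $7\cdot 10^{3(\omega-2)}/10^\omega$ amounts to at $p=1/10$. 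Both formulas happen to be less than $1$ for the relevant $\omega$, so the conclusion survives, but the condition you state is not the right one (and, unlike yours, the true condition fails as $\omega\to 3$, which is why the paper adds a caveat about changing constants).
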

\begin{proof}
The structure of the proof is as follows. We now
 prove that $E(\hat{T}) \leq E(T)$. We prove this by induction. We then go on to prove correctness, also by induction. We then prove the running time; in this part, we analyze the whole recursion tree instead of using induction.

\paragraph{Estimate's expected value.}
We now prove $E(\hat{T}) \leq E(T)$. We prove this by induction on the depth of recursion. If the condition on \cref{line:high_precision_case} is satisfied, then $\hat{T} = T$ and the inequality thus holds.  If the condition on \cref{line:no_triangles} is satisfied, then $\hat{T} = 0$ and the inequality also holds. Consider now the case when neither of these conditions are satisfied. By the inductive hypothesis, $E(\hat{T}_{V'}|V') \leq E(T_{V'}|V')$. Moreover, by \Cref{lem:sampling_concentration}, $E(T_{V'}/p^3|V_H) = E(T(G[V \setminus V_H])|V_H)$ and therefore $E(\hat{T}_{V'}/p^3|V_H) \leq E(E(T(G[V \setminus V_H])|V')|V_H) = E(T(G[V \setminus V_H])|V_H)$. We now bound $E(\hat{T}_H|G)$. It follows from \Cref{lem:heavy_sampling} that $E(\hat{T}_H|G)$ is an unbiased estimate of the number of triangles in $G$ containing at least one vertex from $V_H$. This implies that $E(\hat{T}_H|G) = E(T - T(G[V\setminus V_H])|G)$. We now put this all together:
\begin{align}
E(\hat{T}) &= E\big(E(\hat{T}_H|G) + E(\hat{T}_{V'}/p^3|V_H)\big) \\ &\leq E\big(E(\hat{T}_H|G)\big) + E\big(E(T(G[V\setminus V_H])|V_H)\big) \\ &= E\big(E(T-T(G[V\setminus V_H])|G)\big) + E\big(E(T(G[V\setminus V_H])|V_H)\big) \\ &= E\big(T-T(G[V\setminus V_H])\big) + E\big(T(G[V\setminus V_H])\big) \\
&= E(T)
\end{align}
Note that this bound on the expectation is not using in any way that $V_H$ contains all triangle-heavy vertices and no triangle-light vertices.

\medskip \noindent
\paragraph{Correctness.} We first focus on the base case. Consider the case $A^2 \leq \tilde{T}$. In this case, the condition on \cref{line:high_precision_case} is satisfied. The algorithm is then clearly correct.

We now focus on the case when the condition on \cref{line:no_triangles} is satisfied. We consider one call of the algorithm in the tree of recursion where this is the case.
We denote by $G,T,n,A,\tilde{T},\epsilon$ the respective values in this specific call.
We use subscript ${}_0$ to denote the respective values in the original call. For example, $T_0$ would be the number of triangles in the whole graph on which the algorithm was executed.

It holds that $E(T\wask) \leq 1/10^{3k} T_0 \leq 1/10^{3k} \tilde{T}_0 = \tilde{T}\wask$. When $\tilde{T}\wask < 1/5$ (this is the condition on \cref{line:no_triangles}), it holds by the Markov's inequality that $P(5 \tilde{T}\wask \geq T\wask) \geq 4/5$. When $\tilde{T}\wask < 1/5$, this means that $P(T\wask = 0) \geq 4/5$, meaning that the algorithm is correct in this case and runs in time $O(1)$.

We now prove the inductive step.
We now consider one recursive call, regardless of the level of recursion. Again, we use $G,T,n,A,\tilde{T},\epsilon$ to denote the respective values.
We condition on $\tilde{T}'\wask \geq T\wask$ in this call. This holds with probability at least $19/20$ by the argument based on Markov inequality written above.
We now bound the error probability by $3/20$. Together with the fact that $P(\tilde{T}'\wask \geq T) \geq 19/20$, this gives a bound on the probability of the call returning an invalid answer of $4/20$.
Let $\epsilon'\wask = \frac{qA\wask}{2\tilde{T}'\wask}$. It then holds $\epsilon' \leq \frac{qA\wask}{2 T\wask}$. We have $p = 1/10$. We show that $p \geq \max (5 \frac{1}{\sqrt[3]{{\epsilon'}^{2}\wask T\wask}}, 10 \frac{1}{{\epsilon'}\wask \sqrt{T\wask / \tau\wask}})$ (this is the assumption of \Cref{lem:sampling_concentration}). It holds $\frac{1}{10} \geq 10 \frac{1}{{\epsilon'}\wask \sqrt{T\wask / \tau\wask}}$ from the way we set $\tau\wask$. Because $A^2\wask \geq \frac{10^7}{q} \tilde{T}\wask$, it also holds
\[
5 \frac{1}{\sqrt[3]{\epsilon^{2}\wask T\wask}} \leq \frac{5}{\sqrt[3]{q A\wask^2 / (4 \tilde{T}'\wask)}} \leq \frac{5}{\sqrt[3]{q A\wask^2 / (4 \tilde{T}'\wask)}} \leq \frac{5}{\sqrt[3]{q \frac{10^7}{q} \tilde{T}\wask / (4 \tilde{T}'\wask)}} = \frac{1}{10} 
\]
By \Cref{lem:sampling_concentration}, it holds with probability at least $19/20$ that 
\[
T(G[V'])/p\wask^3 \in (1\pm \epsilon)T(G[V \setminus V_H]) \subseteq T(G[V \setminus V_H]) \pm q A\wask/2
\]
By the induction hypothesis, it holds that with probability at least $19/20$, $\hat{T}_{V'} = T(G[V']) \pm (1-q) p^3 A$. Note that $E(T_{k+1}) = E(E(T_{V'}|G)) \leq E(T\wask/10^3) \leq \tilde{T}\wask/10^3 = \tilde{T}_{k+1}$ and therefore the assumption on $\tilde{T}\wask$ is satisfied. Also note that to get success probability $19/20$, it suffices to take the median of $7$ (as can be easily checked, $P(Y \geq 5) < 1/20$ for $Y \sim Bin(7,1/5)$). Putting this together, with probability at least $18/20$, it holds that 
\[
\hat{T}_{V'}/p\wask^3 \in T(G[V'])/p\wask^3 \pm (1-q) A = T(G[V \setminus V_H]) \pm (1-q/2)A
\]
Moreover, with probability at least $19/20$, $\hat{T}_H = T(V_H) \pm \epsilon\wask T(V_H) \subseteq T(V_H) \pm q A\wask/2$ by \Cref{lem:heavy_sampling} where the inclusion holds because we have set $\epsilon = q A\wask/2\tilde{T}'\wask \leq q A\wask/2T\wask \leq q A\wask/2T\wask(V_H)$. By the union bound, with probability at least $17/20$, $|\hat{T}_H - T\wask(V_H)| \leq q A\wask/2$ and $|T(V')/p\wask^3 - T(V\wask\setminus V_H)| \leq (1-q/2)A$, in which case $\hat{T}_H + \hat{T}_L/p^3 = T \pm A$ --- the resulting answer is correct. Including the probability that $\tilde{T}' \leq T$ (we previously conditioned on this not being the case), we get that the answer is correct with probability at least $16/20$

\paragraph{Time complexity.} We again consider one call and use  $G,T,n,A,\tilde{T},\epsilon$ to denote the respective values in this one call. We first show that we may ignore in the analysis the time spent on \cref{line:count_exactly}. Counting triangles exactly by using fast matrix multiplication takes $\tilde{O}(n^\omega) \subseteq \tilde{O}\big(\frac{n^\omega}{\tau^{\omega-2}} + \frac{\tilde{T}^4 T n^2}{A^6}\big)$ time (the inclusion holds thanks to the assumption \smash{$A^2 \leq \tilde{T}$} which is satisfied on \cref{line:count_exactly}). This is, asymptotically, the same time as the bound we will use on time spent in the parent call which executed this call. We charge the time to the parent call. Since each call only makes a constant number of further recursive calls, the amount charged to the parent call is no larger than our bound on the parent call's time complexity. In the rest of the proof, we therefore ignore the time spent on \cref{line:count_exactly}.

The time complexity of \cref{line:distinguish} is $\tilde{O}(\frac{n^\omega}{\tmax^{\omega - 2}})$ and complexity of \cref{line:heavy_estimate} is $\tilde{O}(\frac{T n^2}{\epsilon^2 \tmax^2})$ by \Cref{cor:distinguish_heavy} and \Cref{lem:heavy_sampling}, respectively (note that $\epsilon$ is defined on \cref{line:heavy_estimate}). Since $p = 1/10$, the number of vertices in the $k$-th recursive call is stochastically dominated by $Bin(n,1/10^k)$ \footnote{
This is the case as in each level of recursion, we keep each triangle-light vertex with probability $p=1/10$ while removing all other vertices with high probability.} while the value of $A$ is in the $k$-th recursive call multiplied by factor of $\big((1-q)p^3\big)^{k} = \big((1-q)/10^3\big)^{k}$ and $\tilde{T}'\wask = 1/10^{3k} \tilde{T}'$.
 In each recursive call, we make $7$ recursive calls for the probability amplification. The number of calls at recursion depth $k$ is then $7^k$.
The expected time spent in the depth $k$ recursive calls on \cref{line:distinguish} is
\[
E\Big(7^k \frac{n\wask^\omega \log n\wask}{{\tmax}\wask^{\omega - 2}}\Big) \leq \tilde{O}\Big(7^k \frac{n_0^\omega/10^{\omega k}}{(\epsilon_0^2 (1-q)^k * \tilde{T}_0'/10^{3 k})^{\omega-2}}\Big) = \tilde{O}\bigg(\Big(\frac{7 \cdot 10^{3 (\omega -2)}}{(1-q)^{\omega-2} 10^{\omega}}\Big)^k \cdot \frac{n_0^\omega }{\epsilon_0^2 T_0^{\omega - 2}}\bigg)
\]
This decreases exponentially since $\omega \leq 2.5$ and, therefore, the time on \cref{line:distinguish} is dominated by the first call. (By changing constants in the algorithm, this argument can be made to work even for asymptotically slower sub-cubic matrix multiplication algorithms.) However, the time spent on \cref{line:heavy_estimate} is 
\begin{align}
E\Big(7^k \frac{T\wask n\wask^2 \wask}{\epsilon\wask^2 {\tmax}\wask^2}\Big) \leq& 7^k \frac{{\tilde{T'}\wask}^4 E(T\wask) n\wask^2}{A\wask^6}\\ \leq& 7^k \frac{{{\tilde{T'}_0}^4 }/10^{4\cdot3k} T_0/10^{3k} n^2/10^{2 k}}{A_0^6 (1-q)^{6 k}/10^{6\cdot 3k}}\\ =& \Big(\frac{10}{1-q} \Big)^k \cdot \frac{{\tilde{T'}_0}^4 T_0 n_0^2 }{A_0^6}
\end{align}
which increases at an exponential rate. We now upper-bound the number of recursive calls (in other words, we upper bound $k$). In each subsequent recursive call, $\tilde{T}$ decreases by a factor of $1/p^3 = 10^3$. This means that after $\log_{1000} T + O(1)$ recursive calls, it will hold that $\tilde{T} \leq 1/20$, in which case the algorithm finishes on \cref{line:no_triangles} in time $O(1)$ with no additional recursive calls. Therefore $k \leq \log_{1000} \tilde{T} + O(1)$. Since the amount of work at each level of recursion increases exponentially, the work is dominated by the last level of recursion. Thanks to our bound on $k$, the time complexity is
\[
\Big(\frac{10}{1-q} \Big)^{\log_{1000} \tilde{T}_0 + O(1)} \cdot \frac{{\tilde{T}_0}^4 T_0 n^2}{A_0^6} = O\Big((1-q)^{-\log_{1000} \tilde{T}_0}\frac{{\tilde{T}_0}^{4+1/3} T_0 n^2}{A_0^6}\Big) = O\Big(\frac{{\tilde{T}_0}^{4+1/3 + \log_{1000}(1/(1-q))} T_0 n^2}{A_0^6}\Big)
\]
The $\delta = \log_{1000}(1/(1-q))$ can be made arbitrarily small by picking $q$ small enough.
\end{proof}

\noindent
We now easily get the following claim, which guarantees relative approximation, unlike \Cref{lem:tc_dense}.
\begin{proposition} \label{prop:tc_dense_with_advice}
There is an algorithm that, given parameters $\epsilon,\delta > 0$ and $\tilde{T}$ returns an estimate $\hat{T}$ of $T$, such that $E(\hat{T}) \leq E(T)$ of $T$ and if, moreover, $T \leq \tilde{T} \leq 2 T$, it holds $\hat{T} \in (1 \pm \epsilon) T$ with probability at least $4/5$. It runs in expected time $\tilde{O}(\frac{n^\omega}{(\epsilon^2 T)^{\omega-2}} + \frac{n^2}{\epsilon^6 T^{2/3-\delta}})$.
\end{proposition}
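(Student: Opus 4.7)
The plan is to invoke \Cref{lem:tc_dense} as a black box, converting its additive $\pm A$ guarantee into a multiplicative $(1\pm\epsilon)$ guarantee by choosing $A$ proportional to $\epsilon\tilde{T}$. Since the input graph is fixed (so $T = E(T)$), the advice condition ``$\tilde{T} \ge E(T)$'' needed by \Cref{lem:tc_dense} follows from the hypothesis $T \leq \tilde{T}$, and the conclusion $E(\hat{T}) \leq E(T)$ in the proposition is exactly the first conclusion of \Cref{lem:tc_dense}, carried over unchanged.

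Concretely, I would set $A \eqdef \epsilon \tilde{T}/2$ and choose $q = q(\delta) > 0$ small enough that the $\delta'$ appearing in the lemma's runtime bound satisfies $\delta' \leq \delta$ (the lemma explicitly says this is possible). With $T \leq \tilde{T} \leq 2T$ we have $\epsilon T/2 \leq A \leq \epsilon T$, so \Cref{lem:tc_dense} returns $\hat{T} \in T \pm A \subseteq T \pm \epsilon T = (1\pm\epsilon) T$ with probability at least $4/5$, which is the multiplicative guarantee we want.

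It remains to check that the runtime $\tilde{O}\big(\frac{n^\omega}{(A^2/\tilde{T})^{\omega-2}} + \frac{\tilde{T}^{4+1/3+\delta} T n^2}{A^6}\big)$ given by \Cref{lem:tc_dense} specializes to the claimed bound. For the first term, $A^2/\tilde{T} = \epsilon^2 \tilde{T}/4 = \Theta(\epsilon^2 T)$, giving $\tilde{O}(n^\omega/(\epsilon^2 T)^{\omega-2})$. For the second, using $\tilde{T} = \Theta(T)$,
\[
\frac{\tilde{T}^{4+1/3+\delta} T n^2}{A^6} \;=\; \Theta\!\left(\frac{T^{4+1/3+\delta} \cdot T \cdot n^2}{\epsilon^6 T^6}\right) \;=\; \Theta\!\left(\frac{n^2}{\epsilon^6 T^{2/3-\delta}}\right),
\]
which matches the proposition. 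The only slightly delicate point is ensuring that $q$ can be chosen as a function of $\delta$ independently of the input, but this is immediate because the lemma's phrase ``for $q>0$ sufficiently small'' refers to a choice that depends only on $\delta$, not on $n,T,\tilde{T},\epsilon$; so the algorithm simply hardcodes this $q$ based on the user-supplied $\delta$.

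I do not foresee a real obstacle: the proposition is essentially a re-parameterization of the previous lemma, and the only substantive work is the two-line algebraic substitution above, together with observing that $T = E(T)$ for a fixed graph so the lemma's expectation-bound and correctness hypotheses both apply.
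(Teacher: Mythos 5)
Your proposal is correct and matches the paper's own proof: both simply substitute $A = \epsilon\tilde{T}/2$ into \Cref{lem:tc_dense} and note that under $T \leq \tilde{T} \leq 2T$ this converts the additive guarantee into the required multiplicative one, with the runtime simplifying as you compute. You have spelled out the algebra and the choice of $q$ in more detail than the paper, which states this in two lines, but the argument is the same.
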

\begin{proof}
By substituting $A = \epsilon \tilde{T} /2$, it is sufficient to calculate approximation with additive error $A$. This can be done by \Cref{lem:tc_dense}, giving us the desired bounds.
\end{proof}

\subsection{Removing advice} \label{sec:removing_advice}
We remove the dependency on $\tilde{T}$ by performing a geometric search. This method has been used many times before, for example in \cite{Eden2017,Goldreich2006,Aliakbarpour2018,Eden2016,Assadi2018,Eden2018}. In our case, it is slightly more complicated in that our algorithm requires both a lower and an upper bound on $T$. For this reason, we describe the method in completeness. We also prove a variant that gives a guarantee of absolute approximation; we will need this for triangle counting in sparse graphs.

\begin{theorem} \label{thm:tc_dense}
For any fixed $\delta > 0$, there is an algorithm that, given $\epsilon>0$ (respectively $A$) returns $\hat{T} \in (1 \pm \epsilon) T$ (respectively $\hat{T} \in T \pm A$) with probability at least $2/3$. It runs in expected time $\tilde{O}(\frac{n^\omega}{(\epsilon^2 T)^{\omega-2}} + \frac{n^2}{\epsilon^6 T^{2/3-\delta}})$ (respectively $\tilde{O}(\frac{n^\omega}{(A^2/\tilde{T})^{\omega-2}} + \frac{T^{5 + 1/3 + \delta} n^2}{A^6})$).
\end{theorem}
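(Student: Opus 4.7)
The plan is a geometric-search reduction that invokes \Cref{prop:tc_dense_with_advice} (for the relative-error statement) or \Cref{lem:tc_dense} (for the additive statement) as a black box with a sequence of guesses $\tilde{T}^{(0)} > \tilde{T}^{(1)} > \ldots$ starting at the obvious upper bound $\tilde{T}^{(0)} = n^3$ and halving at each step. At each guess I will run the advice-using algorithm $\Theta(\log n)$ times independently and take a threshold decision: stop as soon as the median of the estimates in one round exceeds $\tilde{T}^{(i)}/c$ for a sufficiently large constant $c$, and output that median.

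The correctness argument splits into two failure modes. To rule out stopping too early I rely on the expectation bound $E(\hat{T}) \leq E(T) = T$ from \Cref{prop:tc_dense_with_advice}: Markov gives $\Pr[\hat{T} \geq \tilde{T}^{(i)}/c] \leq cT/\tilde{T}^{(i)}$, which is a small constant whenever $\tilde{T}^{(i)} \geq CT$ for a sufficiently large $C$, and a Chernoff-style amplification over the $\Theta(\log n)$ trials makes the probability that the median exceeds the threshold inverse-polynomially small; a union bound over the $O(\log n)$ scales above $CT$ is then affordable. To ensure the search actually terminates, I use that once $\tilde{T}^{(i)} \in [T, 2T]$ the guarantee $\hat{T} \in (1\pm\epsilon')T$ of \Cref{prop:tc_dense_with_advice} holds with probability $\geq 4/5$ per trial, so after amplification the median is at least $T/2 \geq \tilde{T}^{(i)}/4$ and the stopping condition fires. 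The search therefore halts at some $\tilde{T}^{(i)} \in [T, CT]$, and at that halting point the guarantee forces the median estimate to lie in $T \pm O(\epsilon')\tilde{T}^{(i)} = T \pm O(\epsilon' C)T$, which after rescaling $\epsilon'$ by the constant $C$ becomes the required $(1\pm\epsilon)$-approximation.

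For the running time, the per-iteration cost from \Cref{prop:tc_dense_with_advice} as a function of the current guess $\tilde{T}^{(i)}$ (with the true count $T$ fixed) is monotonically nonincreasing in $\tilde{T}^{(i)}$, so summing over the $O(\log n)$ halving steps gives a geometric series dominated by its last term. Since the search terminates with $\tilde{T}^{(i)} = \Theta(T)$, the total expected time is $\tilde{O}\bigl(\frac{n^\omega}{(\epsilon^2 T)^{\omega-2}} + \frac{n^2}{\epsilon^6 T^{2/3-\delta}}\bigr)$, absorbing the $\Theta(\log n)$ amplification factor into the $\tilde{O}$. The additive statement follows in exactly the same way, replacing \Cref{prop:tc_dense_with_advice} with \Cref{lem:tc_dense} and using the target $A$ directly in place of $\epsilon \tilde{T}/2$.

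The step I expect to be the main obstacle is making the stopping criterion robust over the full range of guesses: in the ambiguous band $\tilde{T}^{(i)} \in (2T, CT]$ neither the relative-error guarantee of \Cref{prop:tc_dense_with_advice} nor the plain Markov bound is individually tight, so the constants $c$ and $C$ must be chosen compatibly, the amplification count must be large enough to union-bound over all $O(\log n)$ scales, and one must verify that even an early stop somewhere in that band still yields a good relative approximation after the eventual constant rescaling of $\epsilon'$.
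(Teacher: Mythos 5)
Your geometric-halving search with median amplification and a constant-divided stopping threshold is essentially the same advice-removal argument the paper gives, and your identification of the ambiguous band $\tilde{T}^{(i)} \in (2T, CT]$ correctly pinpoints exactly where the additive $\pm A$ guarantee of \Cref{lem:tc_dense} (with $A$ a suitable constant fraction of $\epsilon\tilde{T}^{(i)}$) — rather than the relative guarantee of \Cref{prop:tc_dense_with_advice}, which only applies when $\tilde{T} \in [T,2T]$ — must be invoked to close the argument, a subtlety the paper itself glosses over. The remaining differences are cosmetic: you amplify with $\Theta(\log n)$ repetitions and an explicit $\tilde{T}^{(i)}/c$ threshold where the paper uses $\Theta(\log\log n)$ repetitions and threshold $\tilde{T}$, but both vanish into the $\tilde{O}$.
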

\begin{proof}
We present the argument for relative approximation. The exact same argument gives the absolute approximation by using \Cref{lem:tc_dense} in place of \Cref{prop:tc_dense_with_advice}.

We start with $\tilde{T} = \binom{n}{3}$ and, in each step, divide $\tilde{T}$ by a factor of two. When it holds that $\hat{T} \geq \tilde{T}$ where $\hat{T}$ is the estimate returned by the algorithm, we return $\hat{T}$ as the final estimate of $T$.

We now argue correctness. The estimate $\hat{T}$ given by the algorithm is always non-negative and it holds $E(\hat{T}) \leq T$. By Markov's inequality, $P(\hat{T} \geq 2T) \leq 1/2$, regardless of the choice of $\tilde{T}$. We amplify this probability to $\Theta(1/\log n)$ by taking the median of $\Theta(\log \log n)$ independent executions. When $\hat{T} \geq \tilde{T}$ (this is when we return $\hat{T}$ as the final estimate), it holds with probability at least $1-O(1/\log n)$ that $\tilde{T} \leq 2 T$ as otherwise, it would be the case that $\hat{T} \geq \tilde{T} \geq 2T$ which holds with probability $O(1/\log n)$. Conditioned on $\tilde{T} \leq 2 T$, the algorithm gives a $(1\pm \epsilon)$-approximate estimate (respectively additive $\pm A$ estimate) by \Cref{prop:tc_dense_with_advice} (respectively \Cref{lem:tc_dense}). By the union bound, the failure probability is then arbitrarily small if we make the constants in $\Theta$ sufficiently small.



We now argue the time complexity. The probability amplification only increases the time complexity by a $O(\log\log n)$ factor. When $\tilde{T} \leq T$, with probability $1-O(1/\log n)$ we return the estimate and quit. We now consider the calls of the algorithm with $\tilde{T} \leq T$. The time complexity increases exponentially (as $T$ is decreasing exponentially) while the probability of performing a call is decreasing at a rate faster than exponential. Therefore, the time complexity is dominated by the first call when $\tilde{T} \leq T$. This time complexity is as claimed by \Cref{prop:tc_dense_with_advice} (respectively \Cref{lem:tc_dense}).
\end{proof}
\subsection{Optimality}
We now show that the lower bound of \citet{Eden2017} implies that our algorithm is in certain sense optimal. Specifically, we show that the exponent in the dependency on $T$ cannot be improved without worsening the dependency on $n$, as long as the exponent of $T$ is constant. That still leaves open the possibility of improving the dependency on $n$ and getting an algorithm with a non-constant exponent of $T$.
\begin{proposition}
Suppose there is an algorithm which runs in time $\tilde{O} \big(\frac{n^\omega}{T^\delta} \big)$ for some constant $\delta$, and returns a constant-factor approximation of $T$ with probability at least $2/3$. Then $\delta \leq \omega-2$.
\end{proposition}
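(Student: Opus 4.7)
The plan is to use the lower bound stated earlier (the corollary following Theorem~4.7 of \citet{Eden2018b}), which says that any constant-factor approximation algorithm for the number of triangles must perform $\Omega(\min(n^2, n^3/T))$ queries (and therefore also take that much time, since each query costs at least one unit of time).

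Suppose for contradiction that there is an algorithm running in time $\tilde{O}(n^\omega/T^\delta)$ with $\delta > \omega - 2$ and returning a constant-factor approximation with probability $2/3$. The key step is to instantiate the lower bound at the value of $T$ that balances the two expressions inside the minimum, namely $T = n$, where $\min(n^2, n^3/T) = n^2$. For a family of graphs on $n$ vertices with $T = \Theta(n)$ triangles, the lower bound forces the algorithm's running time to be $\Omega(n^2)$. On the other hand, the assumed upper bound evaluates on this family to $\tilde{O}(n^{\omega - \delta})$. Combining, we must have $n^{\omega - \delta} \geq \Omega(n^2)$ up to polylogarithmic factors, which for $\delta$ constant forces $\omega - \delta \geq 2$, i.e.\ $\delta \leq \omega - 2$, contradicting our assumption.

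The only thing to check is that a hard instance with $T = \Theta(n)$ actually exists within the family used by \citet{Eden2018b}, which is immediate from the fact that their bound holds for all $n, T$ in the stated range (in particular for $T = n$, which is well within $T \in O(n^3)$). The argument is otherwise a one-line substitution, and there is no real obstacle: the whole content is choosing the right value of $T$ at which to evaluate the lower bound. Exactly the same style of argument, applied to the lower bound $\Omega(\min(m, m^{3/2}/T))$ at $T = \sqrt{m}$, yields the analogous optimality statement for the sparse algorithm in \Cref{thm:tc_sparse}.
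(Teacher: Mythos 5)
Your proof is correct and takes essentially the same approach as the paper: instantiate the lower bound at $T = \Theta(n)$, where $\min(n^2, n^3/T) = \Theta(n^2)$, while the hypothesized algorithm would run in time $\tilde{O}(n^{\omega-\delta}) = o(n^2)$ when $\delta > \omega - 2$, giving the contradiction. Your version is in fact slightly more careful than the paper's, which writes $o(n)$ and $\Omega(n)$ where $o(n^2)$ and $\Omega(n^2)$ are clearly intended.
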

\begin{proof}
Suppose $\delta > \omega-2$. Then, for $T = \Theta(n)$, the algorithm runs in time $o(n)$. This is in contradiction to the lower bound of \citet{Eden2017} which states that any such algorithms has to run in time $\Omega(n)$.
\end{proof}

\section{Triangle counting in sparse graphs} \label{sec:sparse}

We now show how \Cref{alg:tc_dense} can be used to get an efficient algorithm for counting triangles in a sparse graph. The algorithm finds all vertices with degree at least some parameter $\theta$, then it uses \Cref{alg:tc_dense} to count triangles in the subgraph induced by these vertices and uses sampling in the rest of the graph. For this, we first need to find all vertices with degree at least $\theta$. 
We set $\theta = \frac{m^{({\omega -1})/({\omega +1})}}{T^{({\omega -3})({\omega +1})} \epsilon^{({2\omega -6})/({\omega +1})}}$.

\subsection{Finding high-degree vertices}
We now show how to find all vertices with degree at least $\theta$. However, note that the contribution of this paper is mainly in the superlinear regime where all such vertices can be found by iterating over all vertices. \emph{The reader may, therefore, want to skip this section.}

\begin{algorithm}
$S \leftarrow$ sample $\frac{2 m \log n}{\theta}$ edges\\
$V_H \leftarrow$ vertices with degree $\geq \theta$ that are an endpoint of some vertex in $S$\\
\Return{$V_H$}

\caption{Return all vertices with degree at least $\theta$} \label{alg:find_high_degrees}
\end{algorithm}
\begin{lemma}
\Cref{alg:find_high_degrees} returns all vertices with degree at least $\theta$ with probability at least $1-\frac{1}{n}$. It runs in time $O(\frac{m \log n}{\theta})$
\end{lemma}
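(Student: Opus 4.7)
The plan is to establish both claims via very direct computations: the runtime from the algorithm's structure, and the correctness from a straightforward coupon-collector-style argument combined with a union bound.

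For the time complexity, I would observe that in the assumed query model, sampling a single random edge and querying the degrees of its two endpoints takes $O(1)$ time. Since $|S| = O(m \log n / \theta)$, constructing $S$ takes $O(m \log n / \theta)$ time, and filtering endpoints to keep only those with degree at least $\theta$ takes $O(1)$ per sampled edge. Hence the total runtime is $O(m \log n / \theta)$, as claimed.

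For correctness, fix any vertex $v$ with $d_v \geq \theta$. When a single edge is drawn uniformly at random from the $m$ edges of $G$, the probability that $v$ is one of its endpoints equals $d_v / m \geq \theta/m$. Since the samples in $S$ are independent, the probability that no sampled edge is incident to $v$ is at most
\[
\bigl(1 - \theta/m\bigr)^{2 m \log n / \theta} \leq e^{-2 \log n} = 1/n^2.
\]
If some sampled edge is incident to $v$, then $v$ is included in $V_H$ (since we explicitly check $d_v \geq \theta$, and this holds by assumption).

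Taking a union bound over the at most $n$ vertices of degree at least $\theta$, the probability that at least one such vertex is missed is at most $n \cdot 1/n^2 = 1/n$. Hence $V_H$ contains all vertices of degree at least $\theta$ with probability at least $1 - 1/n$, completing the proof. There is no real obstacle here; the only subtlety worth flagging is the use of edge-sampling queries (rather than just vertex/neighbor queries), which is explicitly assumed to be available in the sparse-graph setting described earlier in the paper.
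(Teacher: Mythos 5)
Your proof is essentially identical to the paper's: the same per-vertex probability bound $d(v)/m \geq \theta/m$, the same $(1-\theta/m)^{2m\log n/\theta} \leq e^{-2\log n} = 1/n^2$ calculation, and the same union bound over $n$ vertices. The only difference is that you spell out the runtime justification and the query-model assumption more explicitly, which the paper dismisses as ``clearly as claimed.''
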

\begin{proof}
The time complexity is clearly as claimed. We now prove correctness. We need to prove that with probability at least $1-\frac{1}{n}$, all vertices with degree at least $\theta$ have at least one of their incident edges in $S$. Consider one such vertex $v$ and pick an edge $e$ at random. It holds that $P(v \in e) = \frac{d(v)}{m} \geq \theta/m$. We, therefore, have $\frac{2 m \log n}{\theta}$ trials, each having a success probability at least $\theta/m$. The probability that none of them succeeds is then at most
\[
(1-\theta/m)^{2 m \log(n)/\theta} \leq e^{-2 \log n} = \frac{1}{n^2}
\]
Taking union bound over the $n$ vertices, we get that each vertex with $\geq \theta$ is an endpoint of some edge in $S$ with probability at least $1-\frac{1}{n}$.
\end{proof}

\subsection{Counting triangles}
We now show the algorithm for counting triangles in sparse graphs. In the algorithm, we use a total order $\prec$ defined as $u \prec v$ if $d(u) \leq d(v)$ and ties broken arbitrarily in a consistent way (so that the resulting relation is a total order).

\begin{algorithm}
$\theta \leftarrow \frac{m^{({\omega -1})/({\omega +1})}}{T^{({\omega -3})({\omega +1})} \epsilon^{({2\omega -6})/({\omega +1})}}$\\
$V_H \leftarrow$ find all vertices $v$ with $d(v) \geq \theta$ using \Cref{alg:find_high_degrees}\\
$\hat{T}_H \leftarrow$ count triangles in $G[V_H]$ using \Cref{alg:tc_dense} with error $\pm \epsilon \tilde{T}/2$; amplify success probability to $\frac{5}{6}$\\

\medskip
$M \leftarrow 0$\\
\RepTimes{$k = 12 \frac{\theta m}{\epsilon^2 \tilde{T}}$}{ \label{line:loop}
    $e \leftarrow$ pick an edge uniformly at random\\
    $uv \leftarrow e$, such that $u \succ v$\\
    $w \leftarrow$ pick a random neighbor of $v$\\
    \If{$v \not\in V_H$ and $w \succ v$ and $uvw \in T$}{
        $M \leftarrow M + d(v)$\\
    }
}

$\hat{T}_L = \frac{m}{2 k} M$

\Return{$\hat{T}_H + \hat{T}_L$}

\caption{Estimate the number of triangles in $G$, given some $q > 0$ and advice $\tilde{T}$} \label{alg:tc_sparse}
\end{algorithm}

We now prove the main theorem for triangle counting in sparse graphs. The last part in the following theorem is there to enable us to remove advice in the same way we did in \Cref{sec:removing_advice}.
\begin{lemma} \label{lem:tc_sparse}
Let us have a graph $G$. Given parameters $1 > \epsilon >0, q > 0$ and advice $\tilde{T}$, \Cref{alg:tc_sparse} returns $\hat{T}$ such that $P(\hat{T} \geq 2 T) \leq 2/3$. Moreover, if $\tilde{T} \leq T$, \Cref{alg:tc_sparse} returns $\hat{T} \in (1\pm \epsilon)T$ with probability at least $4/5$. It runs in expected time
\[
\tilde{O}\Big(\frac{m^{2\omega/(\omega+1)}}{\epsilon^{4(\omega-1)/(\omega+1)} \tilde{T}^{2(\omega-1)/(\omega+1)}} + \frac{m^{4/(\omega+1)}}{\epsilon^{2(\omega+1)/(\omega+1)} T^{(7-\omega)/(\omega+1)-\delta}}\Big)
\]
for any $\delta>0$ for $q>0$ sufficiently small.
\end{lemma}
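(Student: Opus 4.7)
The plan is to decompose every triangle of $G$ according to its $\prec$-smallest vertex. Since $u \prec v$ implies $d(u) \le d(v)$, a triangle lies inside $G[V_H]$ iff its $\prec$-smallest vertex lies in $V_H$; writing $T_H = T(G[V_H])$ and $T_L = T - T_H$, the estimators $\hat T_H$ and $\hat T_L$ target these two disjoint classes. The estimate $\hat T_H$ is handled directly by \Cref{thm:tc_dense} (additive version) with target error $\epsilon \tilde T/2$, so after amplification $E[\hat T_H \mid V_H] \le T_H$ unconditionally, and on the high-probability event that $V_H = \{v : d(v) \ge \theta\}$ (provided by \Cref{alg:find_high_degrees}), $\hat T_H \in T_H \pm \epsilon \tilde T/2$ with probability $\ge 5/6$.

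For $\hat T_L$ I would analyze one iteration of the loop on line \ref{line:loop}. Fix a triangle $\{a,b,c\}$ with $\prec$-smallest vertex $a$. The checks inside the loop admit a contribution only when (i) the sampled edge is incident to $a$ and (ii) the sampled neighbor is the third triangle vertex; the check $v \notin V_H$ restricts to triangles with $a \notin V_H$, and the check $w \succ v$ kills the case in which the sampled edge is $bc$, since then $w$ would have to be $a$, which is $\prec$-smaller than both endpoints. Each surviving event has probability $\tfrac{1}{m}\cdot\tfrac{1}{d(a)}$ and increments $M$ by $d(a)$, so each triangle with $a \notin V_H$ contributes $2/m$ in expectation per iteration, giving $E[M] = 2kT_L/m$ and $E[\hat T_L] = T_L$. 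Combining, $E[\hat T] \le T$ and Markov immediately yields $P(\hat T \ge 2T) \le 1/2 \le 2/3$.

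For the concentration of $\hat T_L$ in the regime $\tilde T \le T$, the per-iteration increments $X_i$ are i.i.d., lie in $[0,\theta)$ (whenever $X_i \neq 0$ the loop forces $v \notin V_H$ and hence $d(v) < \theta$), and have mean $2T_L/m$. A multiplicative Chernoff bound for bounded variables, applied with relative error $\epsilon' = \epsilon \tilde T/(2T_L)$ chosen so that the induced additive error on $\hat T_L$ equals $\epsilon \tilde T/2$, gives failure probability on the order of $\exp(-\Theta(\tilde T/T_L))$, which is a sufficiently small constant whenever $T_L = O(\tilde T)$; a small inflation of the constant hidden in the definition of $k$, or a median of a constant number of repetitions, pushes it below $1/30$. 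If instead $T_L \ll \tilde T$, I argue by Markov: $\hat T_L \ge 0$ has mean $T_L$, so both $T_L$ and $\hat T_L$ are $o(\epsilon \tilde T)$ with high probability and the additive discrepancy is automatically within $\epsilon \tilde T/2$. A union bound over the dense call and the sampling analysis then yields $\hat T \in T \pm \epsilon \tilde T \subseteq (1\pm\epsilon)T$ with probability at least $4/5$.

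The time accounting splits into three contributions: $\tilde O(m/\theta)$ for \Cref{alg:find_high_degrees}, $O(k) = O(\theta m/(\epsilon^2 \tilde T))$ for the sampling loop, and the dense call on $G[V_H]$, whose vertex set satisfies $|V_H| \le 2m/\theta$ because each high-degree vertex contributes at least $\theta$ to the degree sum $2m$. Plugging $n' = 2m/\theta$ and $A = \epsilon \tilde T/2$ into the additive running time of \Cref{thm:tc_dense} and substituting the chosen $\theta$ balances the sampling term against the leading dense term and produces the first summand of the claimed bound; the low-order term of \Cref{lem:tc_dense} becomes the second summand. The main obstacle I anticipate is the case split in the concentration argument: when $T_L$ and $\tilde T$ are comparable the Chernoff exponent is only $\Theta(1)$, so one has to pin down constants carefully to reach the required $4/5$ confidence. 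The expectation analysis, the $T_H$-vs-$T_L$ decomposition, and the arithmetic that turns the chosen $\theta$ into the stated time bound are essentially bookkeeping once that concentration step is in place.
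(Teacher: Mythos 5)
Your overall decomposition (triangles split by $\prec$-minimal vertex, unbiasedness of $\hat T_L$ via the $2/m$ hit probability, $|V_H|\le O(m/\theta)$, plugging $\theta$ into the dense bound) matches the paper's proof closely, and the expectation calculation for one iteration is right. There are, however, two genuine gaps.

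First, the concentration target you choose for $\hat T_L$ is additive error $\epsilon\tilde T/2$, and your Chernoff exponent comes out as $\Theta(\tilde T/T_L)$. Your case split only covers $T_L=O(\tilde T)$ and $T_L\ll\tilde T$, but the hypothesis is merely $\tilde T\le T$, which allows $T_L\gg\tilde T$ (this is exactly what happens during the geometric search of \Cref{sec:removing_advice}). In that regime the Chernoff exponent is $o(1)$ and the Markov fallback does not apply; in fact $\mathrm{Var}(\hat T_L)=\Theta(\epsilon^2\tilde T\,T_L)$, so the standard deviation is $\Theta(\epsilon\sqrt{\tilde T\,T_L})\gg \epsilon\tilde T$ and additive error $\pm\epsilon\tilde T/2$ is simply unattainable, not merely hard to establish. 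The paper sidesteps this by targeting additive error $\epsilon T/2$ (not $\epsilon\tilde T/2$) and applying Chebyshev, using \emph{both} $\tilde T\le T$ and $T_L\le T$ to get $\mathrm{Var}(\hat T_L)\le \tfrac{1}{24}\epsilon^2 T^2$ and hence failure probability $\le 1/6$; the two absolute errors $\epsilon T/2$ still combine to $(1\pm\epsilon)T$. If you keep your Chernoff route, setting the relative error to $\epsilon T/(2T_L)$ gives exponent $\Theta(T^2/(T_L\tilde T))\ge\Theta(1)$ with no case split needed.

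Second, for the $P(\hat T\ge 2T)\le 2/3$ claim you assert $E[\hat T_H\mid V_H]\le T_H$ "unconditionally" and then apply Markov to $\hat T$. The paper explicitly flags that this is not available: the expectation bound of \Cref{lem:tc_dense} does not survive the advice removal of \Cref{thm:tc_dense}, which is what the sparse algorithm invokes (see the paper's footnote to this effect). The paper instead bounds $P(\hat T_H\ge 2T_H)\le 1/6$ directly from the amplified success probability and the fact that $\epsilon<1$, and bounds $P(\hat T_L\ge 2T_L)\le 1/2$ by Markov (this part is fine, since $E[\hat T_L]=T_L$), giving $1/6+1/2=2/3$ by a union bound. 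Your route would go through only if you used \Cref{lem:tc_dense} with advice rather than the advice-free wrapper.
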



\noindent
With the current best bound on $\omega$, we get running time of $\tilde{O}(\frac{m^{1.408}}{\epsilon^{1.628} T^{0.814}} +\frac{m^{1.186}}{\epsilon^{6.743} T^{1.371}})$. Note that for constant $\epsilon$, the second term is significantly smaller than the first one.
\begin{proof}
Let $T_L$ be the set of triangles that are not contained in $G[V_H]$ --- or equivalently, that have their $\prec$-minimal vertex outside of $V_H$ --- and, by an abuse of notation, also the number of such triangles. Let $T_H$ be the number of triangles contained in $G[V_H]$.

Let $X_i$ be the increment in $M$ in the $i$-th iteration of the loop on line \cref{line:loop}. We now calculate the expectation of $M$ after all $k$ iterations. Let us fix a triangle $\triangle \in T$ and define $X_i' = X_i$ if $uvw = \triangle$ in the $i$-th iteration and $X_i' = 0$ otherwise (note that $u,v,w$ are defined in the algorithm). If $a$ is the vertex of $\triangle$ minimal w.r.t. $\prec$, it now holds that $E(X_i') = d(a) P(X_i' = d(a))$. It holds that $X_i' = d(a)$ if and only if $v = a$ and $uvw = \triangle$. For this to happen, the algorithm has to sample in the $i$-th iteration the edge $uv$ or $wv$. This happens with probability $2/m$. Then, it has to be the case that the random neighbor sampled on line $7$ is the single vertex of $\triangle$ not in $e$. This happens with probability $1/d(a)$. This gives us that $E(X_i') = \frac{d(a)}{m d(a)} = 2/m$.  By summing over all triangles in $T_L$ and by linearity of expectation, it holds that $E(X_i) = 2T_L/m$. Therefore, the estimate $\hat{T}_L = \frac{m}{2 k} M = \frac{m}{2 k} \sum_{i=1}^k X_i$ is an unbiased estimate of $T_L$.

We now bound the variance. We use the inequality\footnote{$\sup(X)$ is the smallest $x$ such that $P(X > x) = 0$.} $Var(X) \leq \sup(X)E(X)$ to get $Var(X_i) \leq \theta E(X_i) = 2 \theta T_L/m$. Therefore $Var(m X_i/2) = \frac{1}{2} \theta m T_L$. Taking average of $12 \frac{\theta m}{\epsilon^2 \tilde{T}}$ independent trials, we get variance $\frac{1}{24} \epsilon^2 \tilde{T} T_L \leq \frac{1}{24} \epsilon^2 T^2$. By the Chebyshev's inequality, it therefore holds that
\[
P(|\hat{T}_L - T_L| \geq \epsilon T/2) \leq \frac{Var(\hat{T}_L)}{(\epsilon T/2)^2} \leq 1/6
\]

At the same time, by \Cref{lem:tc_dense}, it holds that $\hat{T}_H \in T_H \pm \epsilon \tilde{T}/2 \subseteq T_H \pm \epsilon T/2$ with probability at least $5/6$ (note that we have amplified the success probability). By the union bound, with probability at least $2/3$, it holds that both $|\hat{T}_L - T_L| \leq \epsilon T / 2$ and $|\hat{T}_H - T_H| \leq \epsilon T/2$. Therefore, the algorithm returns a $(1\pm \epsilon)$-approximate answer with probability at least $2/3$.

We now argue the time complexity. There are no more than than $m/\theta$ vertices with degree at least $\theta$. Triangles in $G[V_H]$ are therefore counted, by \Cref{thm:tc_dense}, in time 
\begin{align}
\tilde{O}\bigg(\frac{(m/\theta)^\omega}{(A^2 /\tilde{T})^{\omega-2}} + \frac{T_H^{5 + 1/3 + \delta} (m/\omega)^2}{A^6}\bigg) \leq& \tilde{O}\bigg(\frac{(m/\theta)^\omega}{(\epsilon^2 \tilde{T})^{\omega-2}} + \frac{(m/\omega)^2}{\epsilon^6 T^{2/3-\delta}}\bigg)\\ =& \tilde{O}\Big(\frac{m^{2\omega/(\omega +1)}}{\epsilon^{4(\omega-1)/(\omega+1)} \tilde{T}^{2(\omega-1)/(\omega+1)}} + \frac{m^{4/(\omega+1)}}{\epsilon^{2(\omega+1)/(\omega+1)} T^{(7-\omega)/(\omega+1)-\delta}}\Big)
\end{align}
The time spent in each iteration of the loop on \cref{line:loop} is $O(1)$. The total time spent in the loop is therefore \[
O\Big(\frac{\theta m}{A^2/\tilde{T}}\Big) = O\Big(\frac{\theta m}{\epsilon^2 \tilde{T}}\Big) =  O\Big(\frac{m^{2\omega/(\omega +1)}}{\epsilon^{4(\omega-1)/(\omega+1)} \tilde{T}^{2(\omega-1)/(\omega+1)}}\Big)
\]
Putting these two time complexities together, we get the desired running time.

We now prove that $P(\hat{T} \geq 2 T) \leq 2/3$ \footnote{We cannot use the Markov's inequality in a straightforward way. The reason is that it is not clear how to bound the expectation of the estimate coming from \Cref{thm:tc_dense}. While we do have a bound on the expectation of the estimate given by, \Cref{alg:tc_dense} it is not clear a bound of this type carries over when we perform the advice removal.}. Since $\epsilon < 1$, it holds that $P(\hat{T}_H \geq 2 T_H) \leq P(\hat{T}_H \geq (1+\epsilon) T_H) \leq 1/6$ as we have amplified the success probability of \Cref{alg:tc_dense} to $\frac{5}{6}$. Moreover, as we have shown, $E(\hat{T}_L) = T_L$ and, therefore, $P(\hat{T}_L \geq 2T_L) \leq 1/2$
\end{proof}

\medskip \noindent
By the same argument as in \Cref{sec:removing_advice}, we may remove the advice and get the following claim\footnote{Instead of directly using the Markov inequality in the argument, we may use the last part of \Cref{lem:tc_sparse} to get the same guarantee.}. 

\begin{theorem} \label{thm:tc_sparse}
There is an algorithm that returns a $(1\pm \epsilon)$-approximation of the number of triangles with probability oat least $2/3$. It runs in time
\[
\tilde{O}\Big(\frac{m^{2\omega/(\omega+1)}}{\epsilon^{4(\omega-1)/(\omega+1)} T^{2(\omega-1)/(\omega+1)}} + \frac{m^{4/(\omega+1)}}{\epsilon^{2(\omega+1)/(\omega+1)} T^{(7-\omega)/(\omega+1)-\delta}}\Big)
\]
where $\delta>0$ is a parameter that can be chosen arbitrarily.
\end{theorem}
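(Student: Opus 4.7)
The plan is to remove the advice parameter $\tilde{T}$ by exactly the same geometric-search framework used in the proof of \Cref{thm:tc_dense} (see \Cref{sec:removing_advice}), but invoking \Cref{lem:tc_sparse} in place of \Cref{lem:tc_dense} and \Cref{prop:tc_dense_with_advice}. Specifically, I would initialize $\tilde{T} = \binom{n}{3}$ (a trivial upper bound on $T$), then repeatedly run \Cref{alg:tc_sparse} with the current $\tilde{T}$, amplifying its success probability to $1 - O(1/\log n)$ by taking the median of $\Theta(\log\log n)$ independent executions. If the returned estimate $\hat{T}$ satisfies $\hat{T} \geq \tilde{T}$, output $\hat{T}$; otherwise halve $\tilde{T}$ and repeat.

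For correctness, I would use the two guarantees provided by \Cref{lem:tc_sparse}: (i) $P(\hat{T} \geq 2T) \leq 2/3$ holds unconditionally for any $\tilde{T}$, which after amplification becomes failure probability $O(1/\log n)$; and (ii) once $\tilde{T} \leq T$, the returned estimate is $(1\pm\epsilon)T$ with probability at least $4/5$, which after amplification becomes $1 - O(1/\log n)$. Union-bounding guarantee (i) over the $O(\log n)$ iterations, no iteration with $\tilde{T} > 2T$ triggers the stopping condition, so when we stop we are assured that $\tilde{T} \leq \hat{T} < 2T$. Moreover, guarantee (ii) ensures that within a constant number of halvings after $\tilde{T}$ first drops below $T$, we have $\hat{T} \geq (1-\epsilon)T \geq \tilde{T}$ and hence stop with a valid $(1\pm\epsilon)$-approximation. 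Combining these two events by a union bound, the final output is $(1\pm\epsilon)T$ with probability at least $2/3$.

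For the running time, I would observe that the bound in \Cref{lem:tc_sparse} is monotonically decreasing in $\tilde{T}$, so the final iteration, in which $\tilde{T} = \Theta(T)$, is the most expensive one. Substituting $\tilde{T} = \Theta(T)$ into that bound gives precisely the time stated in the theorem; the $O(\log n \cdot \log\log n)$ overhead from the geometric search and amplification is absorbed into the $\tilde{O}$ notation. Earlier iterations, where $\tilde{T}$ is larger, are strictly faster, so just as in the dense case the cost of all iterations combined is dominated by the final one (in expectation, the probability of performing a further call decreases faster than the cost of that call grows).

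The main obstacle, which is flagged by the footnote in the theorem statement, is that we do not have a clean expectation bound of the form $E(\hat{T}) \leq T$ for the estimator returned by \Cref{alg:tc_sparse}, so the Markov-inequality step used in the proof of \Cref{thm:tc_dense} does not carry over verbatim. The fix is precisely the reason why the ``moreover'' part of \Cref{lem:tc_sparse} was proved: we use the explicit tail bound $P(\hat{T} \geq 2T) \leq 2/3$ as a direct substitute for the Markov step, and the rest of the advice-removal argument proceeds unchanged.
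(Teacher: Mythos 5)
Your route is exactly the paper's: the paper proves \Cref{thm:tc_sparse} by a one-line appeal to the advice-removal argument of \Cref{sec:removing_advice}, with a footnote directing the reader to substitute the unconditional tail bound $P(\hat T \geq 2T) \leq 2/3$ from \Cref{lem:tc_sparse} for the Markov step. Your geometric descent on $\tilde T$, stopping rule, per-iteration amplification, and observation that the last iteration dominates the running time all match that template, and you correctly identify that the point of proving the tail bound in \Cref{lem:tc_sparse} was precisely to serve as that substitute.

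The step that does not go through as you have written it is the amplification of the tail bound. You claim that taking the median of $\Theta(\log\log n)$ independent executions reduces the overshoot probability from $2/3$ to $O(1/\log n)$; but median-of-$k$ amplification only drives down a failure probability that is bounded strictly below $1/2$. Since $2/3 > 1/2$, the median of $k$ copies overshoots $2T$ with probability that in fact \emph{grows} with $k$ (already $20/27$ at $k=3$). This gap is latent in the paper's footnote as well, and a milder version of it sits in the dense-case argument in \Cref{sec:removing_advice}, where Markov gives exactly $1/2$ --- the break-even point for medians. The fix is routine but must be made explicit: re-derive the tail bound of \Cref{lem:tc_sparse} at a larger threshold $c$ (Markov gives $P(\hat T_L \geq c\, T_L) \leq 1/c$, and $P(\hat T_H \geq c\, T_H) \leq 1/6$ is already small, so $P(\hat T \geq cT) \leq 1/6 + 1/c < 1/2$ once $c \geq 4$), which is then median-amplifiable; run the geometric search with stopping rule $\hat T \geq \tilde T/c$; and carry the adjusted constant through the correctness window. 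With those constant adjustments your argument is sound. One small mislabeling to correct: the ``moreover'' clause of \Cref{lem:tc_sparse} is the $(1\pm\epsilon)$-guarantee under $\tilde T \leq T$; the tail bound $P(\hat T \geq 2T) \leq 2/3$ is the first clause of the lemma's statement (it is merely the \emph{last} thing established in the lemma's proof, which is what the paper's footnote refers to).
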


\subsection{Optimality}
Like in the case of dense graphs, we can show that our algorithm is in certain sense optimal.

\begin{proposition}
Suppose there is an algorithm that uses both random vertex and random edge queries, which runs in time $\tilde{O}\big(\frac{m^{2\omega/(\omega+1)}}{T^\delta} \big)$ for some constant $\delta$, and returns a constant-factor approximation of $T$ with probability at least $2/3$. Then $\delta \leq 2(\omega-1)/(\omega+1)$.
\end{proposition}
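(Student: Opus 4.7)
The plan is to mirror the optimality argument given for the dense case: assume for contradiction that there is an algorithm running in time $\tilde{O}(m^{2\omega/(\omega+1)}/T^\delta)$ with $\delta > 2(\omega-1)/(\omega+1)$, and then choose a value of $T$ (as a function of $m$) at which the alleged running time drops strictly below the Eden--Rosenbaum lower bound $\Omega(\min(m, m^{3/2}/T))$ from the theorem cited earlier in the paper.

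The key choice is $T = \Theta(\sqrt{m})$. At this value the lower bound reads $\Omega(\min(m, m^{3/2}/\sqrt{m})) = \Omega(m)$, which is the largest the Eden--Rosenbaum bound ever becomes as a function of $T$ and is therefore the right regime in which to seek a contradiction. Substituting $T = \Theta(\sqrt{m})$ into the hypothetical running time gives $\tilde{O}(m^{2\omega/(\omega+1) - \delta/2})$. A short algebraic check shows that $2\omega/(\omega+1) - \delta/2 < 1$ is equivalent to $\delta > 2(\omega-1)/(\omega+1)$, so under our contrapositive assumption the algorithm runs in time $o(m)$ on inputs with $\Theta(\sqrt{m})$ triangles, contradicting the $\Omega(m)$ lower bound.

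The only subtlety, and the main thing to be careful about, is to make sure the chosen parameter regime is actually realizable: we need a family of graphs on $n$ vertices with $m$ edges and $T = \Theta(\sqrt{m})$ triangles for arbitrarily large $m$, so that the lower bound of Eden and Rosenbaum indeed applies at this parameter setting. Since their construction (and the communication-complexity proof in Eden--Rosenbaum's subsequent paper cited earlier) realizes $\Omega(\min(m, m^{3/2}/T))$ for all $T \in O(m^{3/2})$ and $m \in O(n^2)$, and $T = \Theta(\sqrt{m})$ clearly lies in this range, this is immediate. No further technical work is needed; the proof is three lines of algebra plus one invocation of the lower bound theorem, exactly in the spirit of the dense-case optimality proposition just above.
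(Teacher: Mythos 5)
Your proof is correct and follows precisely the same route as the paper's: set $T = \Theta(\sqrt{m})$, observe the Eden--Rosenbaum lower bound evaluates to $\Omega(m)$ there, and check algebraically that $\delta > 2(\omega-1)/(\omega+1)$ forces $2\omega/(\omega+1) - \delta/2 < 1$, i.e.\ an $o(m)$ running time. The realizability remark is a nice touch but, as you note yourself, it is immediate from the range of parameters covered by the cited lower bound.
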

\begin{proof}
Suppose $\delta > 2(\omega-1)/(\omega+1)$. Then, for $T = \Theta(\sqrt{m})$, the algorithm runs in time $o(m)$. This is in contradiction to the lower bound of \citet{Eden2017} which states that any such algorithms has to run in time $\Omega(m)$.
\end{proof}

\paragraph{Acknowledgements.}
I would like to thank several people: Václav Rozhoň, who gave early feedback on the results and who suggested to me I try the recursive approach; Rasmus Pagh for helping improve a draft of this paper; my supervisor Mikkel Thorup for helpful discussions and for his support; and Talya Eden for advice regarding related work. I am very grateful to Bakala Foundation for financial support.

\bibliographystyle{plainnat}
\bibliography{literature}

\end{document}